\newenvironment{myquote}{\list{}{\leftmargin=0.1in\rightmargin=0.1in\topsep=3pt}\item[]}{\endlist}
\newtheorem{observation}{Observation}
\newcommand{\C}{\mathfrak{C}}
\newcommand{\B}{\mathsf{B}}
\newcommand{\NB}{\mathsf{NB}}
\newcommand{\I}{I}
\newcommand{\g}{\mathcal{G}}
\newcommand{\N}{N}
\newcommand{\T}{\mathcal T}
\renewcommand{\O}{{\bf o}}
\newcommand{\Q}{{\bf Q}}
\newcommand{\Win}{\mathcal C}
\newcommand{\tuple}[1]{\left\langle #1 \right\rangle}
\newcommand{\set}[1]{\left\{ #1 \right\}}
\renewcommand{\phi}{\varphi}
\renewcommand{\phi}{\varphi}
\title{
Decentralization in Open Quorum Systems
%Limitative Results for Ripple and Stellar
%Limits to Decentralisation in Open Quorum-Based Consensus
%Limits to Decentralization in Consenus over Trust Networks
%(De)centralization and (In)tractability \\ in Consensus by Trust Networks
%Consensus by Trust Networks: \\ 
%Decentralization, Power, (In)tractability
%Game-Theoretic and Computational Limits to Decentralization
}
\author{Andrea Bracciali$^1$ \and Davide Grossi$^{2,3,4}$ \and Ronald de Haan$^3$}
\institute{$^1$Department of Computer Science \\ University of Stirling  \\
$^2$Bernoulli Institute for Mathematics, Computer Science and Artificial Intelligence \\ University of Groningen \\
$^3$Institute for Logic, Language and Computation \\ University of Amsterdam \\
$^4$Amsterdam Center for Law and Economics \\
University of Amsterdam
}
\begin{document}
\maketitle

\begin{abstract}
\sloppy
Decentralisation is one of the promises introduced by blockchain technologies: fair and secure interaction amongst peers with no dominant positions, single points of failure or censorship. Decentralisation, however, appears difficult to be formally defined, possibly a continuum property of systems that can be more or less decentralised, or can tend to decentralisation in their lifetime.
In this paper we focus on decentralisation in quorum-based approaches to open (permissionless) consensus as illustrated in influential protocols such as the Ripple and Stellar protocols. Drawing from game theory and computational complexity, we establish limiting results concerning the decentralisation vs. safety trade-off in Ripple and Stellar, and we propose a novel methodology to formalise and quantitatively analyse decentralisation in this type of blockchains.
\end{abstract}

\begin{keywords}
Decentralisation analysis, Quorum systems, Trust Networks, Game theory, Power indices, Ripple, Stellar
%Computational Social Choice
\end{keywords}

%%%%%%%%%%%%%%%%%%%%%%%%%%%

\section{Introduction}
\label{s.deccons}

To allow ``any two willing parties to transact directly with each other without the need for a trusted third party"~ \cite{bitcoin} was one of the main motivations for the introduction of the Bitcoin blockchain, and several earlier attempts at digital currencies.  A blockchain is a distributed state machine in charge of guaranteeing the correctness and trustability of data,%
\footnote{Blockchains can also make the computation trustable, e.g., guaranteeing the fair and untamperable execution of agreements among peers encoded as programs.}
 e.g., monetary transactions in the case of Bitcoin. State updates are recorded in a chain of data blocks. Data are protected by replication of the state, i.e., of the chain of blocks, within a network of peers. The blockchain protocol must guarantee some form of distributed consensus allowing peers to agree on the information contained in the blockchain, e.g., who has been paid, and that no double spending of virtual coins has occurred, without the supervision of a centralised authority -- a currency without a central bank.

\subsubsection{Context: the quest for decentralisation}
Decentralisation, which can be informally understood as the lack of dominant positions amongst the independent and untrusted peers, is then a strongly desirable property of blockchains. 
Decentralisation may involve several aspects of a blockchain definition and architecture, and is strongly connected to the problem of governance. There can be centralisation in the software maintenance, the management of the community of peers and users and associated access policies, the management of the tokenomics and incentives\footnote{Monetary and incentive policies, e.g., the economics of tokens,  the motivational reward for peers, and the initial distribution of stake in PoS (see below), may affect the stability and long-term survival of the blockchain.},  the resolution of disputes, and the pooling of peers (see~\cite{Gervais2014BTCcentralised} for early considerations on the decentralisation in Bitcoin). 

In this paper we address decentralisation in distributed consensus. Decentralisation in the sense of  ``any two willing parties" somehow implies openness, embodied by  {\em permissionsless} blockchains, where participation is allowed in a generally unrestricted way. Permissionless blockchains are clearly exposed to the presence of Byzantine peers, i.e.,  dishonest peers trying to exploit the network and not bound to the blockchain protocol. Byzantine distributed consensus is a long-standing problem, from Lamport's characterisation~\cite{lamport82byzantine} and the FLP impossibility result~\cite{Fischer1985impossibility}, to the subsequent research on data replication and consistency based on Byzantine Fault-tolerant consensus (BFT),~\cite{malkhi98byzantine,schneider90implementing}. Several proposals are  currently competing in a multi-billion market, addressing the so-called {\em blockchain trilemma}, i.e., achieving {\em security}, {\em scalability} and {\em decentralisation} together.

One of the breakthroughs of Bitcoin was the introduction of the Proof-of-Work (PoW)~\cite{PoW93} as a mechanism to enable a probabilistic Byzantine distributed consensus. Informally speaking, by solving a computationally hard problem one of the peers is entitled to create the next block, cryptographically linked to the previous ones. Under the assumption that Byzantine computational power is suitably limited within the network, the probability that enough work can be channeled to alter block history decreases with the ageing of the blocks, as much as new blocks are created~\cite{Garay2015,Eyal2018majority}. Bitcoin reaches finality with an acceptable probability in about one hour (6 blocks),  with limited transactions per second.\footnote{More technical and comprehensive introductions to blockchains can be found in~\cite{bonneau2015sok,narayanan16handbook,wattenhofer17distributed}.}

In  {\em Proof-Of-Stake}  (PoS) blockchains peers contribute to the definition of the next block with a probability proportional to the stake (coins), rather than computational power, they detain in the system. Safety follows from the honest peers detaining the majority of stake. Scalability improves in Proof-of-Stake, but the management  of security typically results in being more complex.

The BFT paradigm has also been proposed for blockchain consensus, providing scalability in transaction per second thanks to low transaction latency and high throughput. BFT, however, is more constrained in terms of the scalability in the number of peers~\cite{vukolic15quest}, since the number and identity of peers needs to be known and in some cases fixed~\cite{Castro:1999:PBFT}. This kind of blockchain has been proposed, for instance, for financial services, where a limited number of known and certified peers need to exchange fast and numerous transactions. It is worth remarking that if consensus requires control on peers, a centralised authority might be required, with implications also on identity, privacy and censorship.

\subsubsection{Research question}

In this paper we focus on BFT blockchains based on quorum systems \cite{vukolic12quorum}. In such systems consensus emerges from neighbourhoods of peers and the properties of such neighborhoods, together with assumptions on Byzantine failure thresholds, are essential to guarantee the liveness and safety of the consensus protocol, that is, whether honest peers are able to eventually reach consensus on a correct next state. At the heart of these protocols is a notion of trust between peers: nodes select which other nodes to trust, and listen to, in the network. Interested in understanding how much decentralisation can (or cannot) be achieved in principle in such systems, we address the following question:
\begin{center}
{\em To what extent can consensus be decentralized, when based on trust networks?}
\end{center}
We approach the question by using tools from cooperative game theory (specifically the theory of command games \cite{hu03authority,hu03authorityB}) and computational complexity theory. The interface of methods from theoretical economics and computational complexity have proven extremely prolific in other areas of computer science and artificial intelligence, such as computational social choice theory \cite{comsoc_handbook,grandi18voting}. Our paper also aims to showcase these methods for general investigations on blockchain consensus and decentralisation. 

We will consider Ripple \cite{chase18analysis,schwartz14ripple} and Stellar~\cite{mazieres16stellar}, two  Quorum-based blockchains attempting to extend the applicability of the BFT paradigm from a permissioned to a permissionless setting, aiming at improving decentralisation. 

Ripple provides frictionless global payments and corporate-oriented efficient transactions. It currently relies on a list of ``authorised'' validators\footnote{At the time of writing the list consists of about 30 validators, available at \url{https://xrpcharts.ripple.com}}
in charge of the correctness of transactions. Access is permissioned and each peer will need to have in their neighbourhood of trust a number of validators from the list. While the list was originally entirely composed by Ripple validators, today third-party validators, e.g., private companies and universities, have been included.

Stellar  provides payments and asset management to corporate and individuals, and aims to push decentralisation further by offering open membership and allowing peers to autonomously define their trust networks, i.e., the set of validators that they trust. However, strong constraints hold on the topology of such trust networks.

Both Ripple and Stellar have been object of criticism with respect to the level of decentralisation of their current implementations, and the need for further research on protocols like Ripple and Stellar is emphasised, for instance, in \cite{cachin17blockchain}. 

%This is relevant for our research and it will be useful to benchmark our general methodology with their finding. This is scope for future work.

\subsubsection{Related work}

Even though Ripple and Stellar are, respectively, the third and tenth blockchain systems in terms of market capitalization, very little foundational work exists on their protocols. Correctness analyses of Ripple have been proposed in \cite{chase18analysis}, and of Stellar in \cite{mazieres16stellar,garcaprez2018OPODISStellar}.
A specific study on the issue of decentralisation in Stellar has also very recently been presented in~\cite{kim2019IEEEPSB}.  Authors investigate the current topology of Stellar's quorum slices by means of an extended version of PageRank that they introduce to evaluate nodes' influence. Findings about the current status show centralisation on two critical validators, which are controlled by the Stellar Foundation. 

Our paper contributes further general results on the level of decentralization that could reasonably be achieved in consensus based on open quorum systems.

\subsubsection{Paper contribution and outline}
The contributions of this paper are: 
\begin{itemize}

\item a novel theoretical framework, rooted in economic theory (command games \cite{hu03authority,hu03authorityB}, {\em power indices}~\cite{penrose46elementary,banzhaf65weighted}), to ascertain the influence that peers can exert on each other in quorum systems based on trust networks. This contributes a novel methodology for a much needed quantitative evaluation of decentralisation in blockchain (here in the context of consensus). The proposed methods are applied to Ripple and Stellar (Theorem \ref{prop:Ireg}).

%Further, we complement these techniques with the method of computational complexity analysis, which allows us to %characterise the feasibility of proposed approaches;

\item a general impossibility of decentralisation result for a class of consensus protocols of the Ripple type (Theorem \ref{th:trust}), which are based on trust networks with a fixed threshold of tolerable Byzantine peers.
 This results implies that in Ripple the necessary existence of validators that must be trusted by every peer in the network, hindering the possibility of full decentralisation.
 
 \item an appraisal of computational barriers to decentralization in protocols like Stellar, based on so-called federated Byzantine agreement systems. Specifically, we show that constraints that are necessary to guarantee the safety of the network require peers to be able to solve problems that are computationally intractable in principle (Theorems \ref{prop:qi-conp-complete} and \ref{prop:sliceadd-conp-complete}).
 %even in the ideal, but unrealistic, hypothesis of complete knowledge of the whole network topology. 
 This result imposes in Stellar limitations to the autonomous construction of trust networks by peers, limiting also in this case the possibility of full decentralisation.

\end{itemize}
Trust networks and command games are introduced in Section~\ref{sec:prelim}, impossibility and intractability results are presented in Section~\ref{s.impossibility}, and decentralisation measures in  Section~\ref{sec:influence}.
%
%%%Some examples are reported in Section~\ref{sect:examples}.
%
Section~\ref{sect:concl} contains final considerations and future outlooks. Relevant literature is discussed throughout the paper, proofs appear in the Appendix.

%%%%%%%%%%%%%%%%%%%%%%%%%%%%

\section{Preliminaries} 
\label{sec:prelim}

In this section we link the open quorum systems underlying Ripple ({\em trust networks} \cite{ghosh07mechanism}) and Stellar ({\em federated Byzantine agreement systems}, FBAS, \cite{mazieres16stellar}) to structures studied in the economic theory literature, known as {\em command games}. This will then allow us to import concepts and results from the field of command games to the workings of consensus in Ripple and Stellar. 

\subsection{Byzantine Trust Networks}

A set of peers, hereafter nodes, want to get to an agreement on a binary opinion $x \in \set{0,1}$, e.g., whether a given transaction should belong, or not, to the current ledger. The set is open, but we consider a snapshot at a given point in time.  Byzantine nodes, differently from honest ones, can hold and reveal multiple, inconsistent opinions. The goal of consensus is to have all honest nodes eventually agreeing on the same opinion (no-forking). Crucially, a honest node's opinion depends on the opinions revealed by the nodes (honest or byzantine) that it trusts.

\begin{definition}
\label{def:btn}
A {\em Byzantine trust network} (BTN) is a tuple $\T = \tuple{N, H, T_i, \Win_i}$ where:
\begin{itemize}
\item $\N = \set{1, \ldots, n}$ is a finite set of nodes.
\item $H \subseteq N$ is the set of honest nodes. $B = N \backslash H$ is the set of Byzantine nodes.  
\item $T_i \subseteq N$, for each node $i \in H$, is the non-empty set of nodes that $i$ trusts, i.e., its {\em trust set}.\footnote{Ripple and Stellar refer to trust sets as unique node lists (UNLs).} 
%For all $i$,  $i\in T_i$.
\item $\Win_i \subset 2^{T_i}$, for each honest node $i \in H$, is the collection of sets of nodes, among those that $i$ trusts, that can determine $i$'s opinion. We refer to $\Win_i$ as the set of {\em winning coalitions} for node $i$. For ease of presentation we will sometimes treat $\Win$ also as a function assigning a set of subsets of $\N$ to each honest node.
\end{itemize}
We will sometimes assume that, for all $i$,  $i\in T_i$.
We will sometimes furthermore assume that for all $i \in H$, $\set{i} \in \Win_i$. In such a case the BTN is said to be {\em vetoed}.
%If $\forall\ i, j \in N$ $q_i = q_j$ the BTN is said to be {\em uniform}, with quota  $q$.
\end{definition}
Intuitively, a winning coalition $C \in \Win_i$ is a set of nodes such that, if all members of one of such coalition agree on a value, then that value is also $i$'s opinion. When $i$ belongs to $\Win_i$, $i$ cannot validate an opinion unless it also holds such opinion (it holds a veto for its own validation). 
In the Stellar white paper \cite{mazieres16stellar} BTN are referred to as {\em federated Byzantine agreement systems} (FBAS), or as {\em federated Byzantine quorum systems} in \cite{garcaprez2018OPODISStellar},
and the winning coalitions of a node are referred to as {\em quorum slices}. BTNs are also known structures in the economic theory literature, where they are referred to as {\em command games} \cite{hu03authority,hu03authorityB}, or as {\em simple game structures} \cite{karos15indirect}.

\medskip

A natural class of BTNs is obtained by associating a quota, or threshold, $q_i \in (0.5, 1]$ to each honest node $i$:\footnote{Cf. \cite{garcia-molina85how}.}
\begin{definition}
\label{def:qbtn}
A Quota Byzantine Trust Network (QBTN) is a BTN such that for all $i \in H$ there exists a quota $q_i \in (0.5, 1]$ such that:
$$
\Win_i = \set{C \subseteq T_i \mid |C| \geq q_i \cdot | T_i |}.
$$
A QBTN is therefore denoted by a tuple $\T = \tuple{N, H, T_i, q_i}$. A QBTN is said to be {\em uniform} whenever $q_i = q_j$ for any $i,j \in H$.
\end{definition}
Intuitively, QBTNs are BTNs where the winning coalitions of a node are determined by a numerical quota: $i$'s opinion is determined whenever at least $q_i$ nodes in $T_i$ hold that opinion. Assuming for each $T_i$ a standard failure model with a fraction of Byzantine node $b_i \in [0, 0.25)$ (cf. \cite{lamport82byzantine}), when $q_i = 1 - b_i$ it is guaranteed that {\em i)} the quota is met whenever the honest nodes in $T_i$ agree, and {\em ii)} if the quota is met for an opinion $x \in \set{0,1}$, then there is at least an honest majority of nodes with opinion $x$ in the trust set of $i$. So in this paper we will assume quota fall in the $[0.75, 1]$ range. A QBTS is said to be {\em uniform} whenever $q_i = q_j$ for any $i,j \in H$.

\medskip 

The Ripple consensus protocol  \cite{chase18analysis,schwartz14ripple} is based on uniform QBTNs with quotas set to $0.8$. 
The Stellar consensus protocol as described in \cite{mazieres16stellar} is not based on quota but requires the generality of BTNs while assuming them to be vetoed.

\begin{remark} \label{rem:dictators}
In Definition \ref{def:btn} we associate winning coalitions only to honest nodes. We do this for simplicity but it should be clear that trivial collections of winning coalitions can be associate also to Byzantine nodes. Since the opinion of a Byzantine node $i$ is not influenced by any other node its trivial collection of winning coalitions is the set $\set{ C \subseteq N \mid \set{i} \subseteq C}$, that is, the set of all coalitions containing $i$. Intuitively, this amounts to stating that $i$ is the only node influencing its own opinion.
\end{remark}

\subsection{Opinions and Safety}

At any given time, the collection of each node's opinions defines an opinion profile that associates a `genuine' opinion from $\set{0,1}$ to every honest node (the opinion that the node reveals to the network). And to each Byzantine node it associates a function from honest nodes to opinions. This function represents the values that each Byzantine node would reveal to each honest node in the network that includes it in its trust set.
\begin{definition} 
An {\em opinion profile} $\O: N \to \set{0,1} \cup \set{0,1}^H$ such that $\O(i) \in \set{0,1}$ if $i \in H$ and $\O(i) \in \set{0,1}^H$ if $i \in B$.
\end{definition}

\noindent
Intuitively, $i$'s opinion is settled whenever a winning coalition of trusted nodes holds that opinion.
In a QBTN, $i$'s opinion (if $i$ is honest) is settled whenever there are at least $q_i \cdot |T_i|$ nodes with a same opinion among the nodes it trusts. Given an opinion profile $\O$ we denote by 
\begin{align}
& T^\O_j(x) = \set{i \in T_j \mid \O(i) = x \mbox{    if    } i \in H, \mbox{    or    } \O(i)(j) = x \mbox{    if    } i \in B}.
\end{align} 
This is the set of nodes (honest or Byzantine), among those that $j$ trusts, holding opinion $x$ in profile $\O$. 
We say that $i$ {\em validates} $x \in \set{0,1}$ (in a given $\O$) if $T^\O_i(x) \in \Win_i$. We then say that an opinion profile $\O$ is {\em forked} (or, is a fork) if there are two honest nodes $i,j \in H$ such that $i$ validates $x$ and $j$ validates $\overline{x}$ in $\O$, where, given $x \in \set{0,1}$,  $\overline{x}$ represents the element of singleton $\set{0,1}\backslash \set{x}$. 

\medskip

Let us introduce some auxiliary notions. Let $s: N \to 2^N$ be a function picking, for any agent $i$, one coalition out of $\Win_i$.\footnote{Clearly there are $|\prod_{i \in H}Win_i|$ such functions.} Each function $s$ induces an operator $F_s: 2^N \to 2^N$ such that $F_s(C) = \bigcup_{i \in C} s(i)$, collecting, for each $i$ in $C$, the winning coalition $s(i)$ picked by function $s$. We further denote by $F^k_s$ the $k$-th iteration of $F_s$. Since $\N$ is finite, for any $C \subseteq \N$ there exists $k \in \mathbb{N}$ such that $F^k_s(C) = F^{k+1}_s(C)$.\footnote{Cf. the Knaster-Tarski theorem \cite{davey90introduction}.}

We say that an opinion profile $\O$ is {\em strongly forked} (or, is a strong fork) if there are two honest nodes $i,j \in H$ and a function $s$ such that all nodes in $\bigcup_{1 \leq m} F_s^m(\set{i})$ agree on $x$ and all nodes in $\bigcup_{1 \leq m} F_s^m(\set{j})$ agree on $\overline{x}$. Formally: for all $k \in \bigcup_{1 \leq m} F_s^m(\set{i})$ $\O(k) = x$, and for all $k \in \bigcup_{1 \leq m} F_s^m(\set{j})$ $\O(k) = \overline{x}$. 
That is, there is a winning coalition for $i$ agreeing on $x$ and a winning coalition for $j$ agreeing on $\overline{x}$, {\em and} all nodes in that winning coalition for $i$ also have a winning coalition agreeing on $x$ and all nodes in that winning coalition for $j$ have a winning coalition agreeing on $\overline{x}$, {\em and} so on.

\begin{definition} 
\label{def:safety}
A BTN is {\em safe} if there exists no forked profile for it. It is {\em weakly safe} if there exists no strongly forked profile for it.
\end{definition}
Safety rules out the possibility that two honest nodes may settle on different opinions. Weak safety allows for forks of only a limited kind. It rules out the possibility that forks are of a `deep' kind involving all winning coalitions upon which the diverging opinions are rooted. 
%\footnote{A stronger notion of safety for BTNs is investigated in %\cite{chase18analysis} but his weaker notion suffices for the %purposes of this paper. The results we present hold a fortiori %for stronger notions of safety.} 
Clearly safety implies weak safety but not vice versa.

\section{(De)centralisation and (In)tractability}
\label{s.impossibility}

This section explores inherent limitations present in the above notion of safety for BTNs, establishing general limitative results for the class of consensus protocols based on them, such as Ripple and Stellar. First, it focuses on uniform QBTNs (Definition \ref{def:qbtn}), as exemplified by the Ripple consensus protocol, showing that safety drastically limits the freedom of nodes in selecting trustees. Second, it focuses on safety for general BTNs (Definition \ref{def:btn}), as exemplified by the Stellar consensus protocol, showing that, even though safety in such settings allows for more freedom on the part of nodes, it does require single nodes to solve decision problems that are, in principle, computationally intractable.

%%%%%%%%%%%%%%%%%%%%%%%

\subsection{Safety Implies Centralization in Uniform QBTNs}

We show that for safe consensus to be possible on uniform QBTNs nodes cannot be fully free to choose their trust set. 

The result builds on ideas from \cite{chase18analysis}. We will use the following auxiliary definition, also borrowed from \cite{chase18analysis}:
\begin{align}
\beta_{ij} = \min\set{|T_i \cap T_j|, b_i \cdot |T_i|, b_j \cdot |T_j|}. \label{eq:beta}
\end{align}
Intuitively, $\beta_{ij}$ denotes the number of Byzantine agents present in the intersection of the trust sets of $i$ and $j$. Such a number equals the maximum amount of Byzantine nodes assumed by the node, either $i$ or $j$, that tolerates fewer Byzantine nodes. However, such a number cannot obviously exceed the size of the intersection itself.

\begin{lemma}[\cite{chase18analysis}] \label{lemma:most}
Let $\tuple{N, H, T_i, q_i}$ be a QBTN. For any profile $\O$ and node $i \in H$, if $|T_i^\O(x)| > 0$ then for any $j \in H$, 
\begin{align}
|T_j^\O(x) \cap H| & \geq |T_i \cap T_j| + |T_i^\O(x)| - |T_i| - \beta_{ij} \label{uno} \\
|T_j^\O(\overline{x})| & \leq |T_j| -  |T_i \cap T_j| - |T_i^\O(x)| + |T_i| + \beta_{ij} \label{due}
\end{align}
\end{lemma}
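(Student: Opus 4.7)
The plan is to derive both inequalities from a single counting argument centered on the overlap $T_i \cap T_j$, exploiting the fact that only Byzantine nodes can display different opinions to different trusters. So the proof reduces to (i) a pure set-size inclusion--exclusion on the subsets $T_i^\O(x)$ and $T_j$ of $T_i$, and (ii) a careful bookkeeping of how many members of $T_i^\O(x) \cap T_j$ need not also appear in $T_j^\O(x)$.

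First I would establish inequality (\ref{uno}). Since both $T_i^\O(x)$ and $T_i \cap T_j$ are subsets of $T_i$, their union has size at most $|T_i|$, and ordinary inclusion--exclusion gives
\begin{align*}
|T_i^\O(x) \cap T_j| \;=\; |T_i^\O(x) \cap (T_i \cap T_j)| \;\geq\; |T_i^\O(x)| + |T_i \cap T_j| - |T_i|.
\end{align*}
Next I would argue that an honest node $k \in T_i^\O(x) \cap T_j$ also lies in $T_j^\O(x)$: honesty means $k$ reveals a single opinion to the whole network, so $\O(k) = x$ forces $k \in T_j^\O(x)$. Hence
\begin{align*}
|T_j^\O(x) \cap H| \;\geq\; |T_i^\O(x) \cap T_j \cap H| \;\geq\; |T_i^\O(x) \cap T_j| \;-\; \bigl|(T_i \cap T_j) \cap B\bigr|.
\end{align*}
The definition of $\beta_{ij}$ in (\ref{eq:beta}) is exactly designed to upper-bound $|(T_i \cap T_j) \cap B|$: the number of Byzantine nodes in the intersection cannot exceed the size of the intersection itself, nor the global bounds $b_i \cdot |T_i|$ and $b_j \cdot |T_j|$ that the two nodes tolerate. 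Chaining the two inequalities yields (\ref{uno}); note that the hypothesis $|T_i^\O(x)| > 0$ is inessential, since otherwise the right-hand side is non-positive and the bound is trivial.

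Inequality (\ref{due}) then follows almost for free. Every node in $T_j$ reveals exactly one opinion to $j$ (its own opinion if honest, or the assigned value $\O(k)(j)$ if Byzantine), so $T_j^\O(x)$ and $T_j^\O(\overline{x})$ partition $T_j$. Therefore
\begin{align*}
|T_j^\O(\overline{x})| \;=\; |T_j| - |T_j^\O(x)| \;\leq\; |T_j| - |T_j^\O(x) \cap H|,
\end{align*}
and substituting the lower bound from (\ref{uno}) gives (\ref{due}) directly.

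The only subtle point, and the one I would be careful about, is the propagation of ``holding opinion $x$'' from $i$'s viewpoint to $j$'s viewpoint through the intersection $T_i \cap T_j$: this step implicitly distinguishes honest from Byzantine members and is the sole reason the error term $\beta_{ij}$ appears. Everything else is routine set arithmetic and the binary partition of $T_j$ into the two opinion classes.
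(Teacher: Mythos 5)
Your proof is correct and takes essentially the same route as the paper's: for \eqref{uno} it uses the identical chain $|T_j^\O(x) \cap H| \geq |T_i^\O(x) \cap T_j \cap H| \geq |T_i^\O(x) \cap T_j| - \beta_{ij} \geq |T_i^\O(x)| + |T_i \cap T_j| - |T_i| - \beta_{ij}$ (your inclusion--exclusion step is just a restatement of the paper's $|T_i \setminus T_j|$ computation), and \eqref{due} is likewise obtained by subtracting this bound from $|T_j|$ via the disjointness of the opinion classes in $T_j$. Your explicit partition argument for \eqref{due} and the remark that the hypothesis $|T_i^\O(x)| > 0$ is dispensable are minor clarifications of the paper's somewhat looser phrasing, not a different approach.
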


\noindent
This lemma establishes a lower bound on the number of honest nodes with opinion $x$ that a honest node $j$ can observe in its trust set, given the number of nodes (not necessarily honest) that another honest node $i$ observes. It is used in the proof of Lemma~\ref{lemma:half}. Notice that the upper bound in \eqref{due} is not necessarily strict as illustrated in the following example.
\begin{example}
Let $\tuple{N, H, T_i, q_i}$ be such that: $N = \set{1, \ldots, 5}$, $B = \set{3}$ (recall $N = H \cup B$), $T_1 = T_2 = \set{1, 2, 3}$ and $T_4 = T_5 = \set{3, 4, 5}$, $q_1 = q_2 = q_4 = q_5 = 1$. Let then $\O$ be such that $\O(1) = \O(2) = 1$, $\O(4) = \O(5) = 0$, finally $\O(3)$ be such that $\O(3)(1) = \O(3)(2) = 1$ and $\O(3)(4) = \O(3)(5) = 0$. We have that $4$ and $5$ see no honest node with opinion $1$, and we thus have $|T_4^\O(0)| = |T_5^\O(0)| = 3$.
\end{example}

\begin{lemma} \label{lemma:half}
Let $\T = \tuple{N, H, T_i, q_i}$ be a safe uniform QBTN. Then for all $i,j \in H$:
$$
|T_i \cap T_j| > \frac{b}{1-b} \cdot (|T_i| + |T_j|).
$$ 
\end{lemma}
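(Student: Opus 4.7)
The plan is to prove the contrapositive. Suppose a pair $i, j$ of honest nodes has $c := |T_i \cap T_j| \leq \frac{b}{1-b}(t_i+t_j)$, writing $t_i := |T_i|$, $t_j := |T_j|$, and $q := 1-b$. I will exhibit a forked opinion profile, contradicting safety.

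First I extract the key arithmetic fact. The hypothesis rearranges to $c \leq b(t_i+t_j) + bc$. Since $T_i \cap T_j \subseteq T_i, T_j$ forces $c \leq \min(t_i, t_j)$, each of the three quantities $c$, $bt_i$, $bt_j$ is at least $bc$, so $\beta_{ij} = \min\{c, bt_i, bt_j\} \geq bc$. Substituting yields
\[
c \;\leq\; b(t_i+t_j) + \beta_{ij}.
\]

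Second I build the adversarial profile $\O$. Place exactly $\beta_{ij}$ Byzantine nodes inside $T_i \cap T_j$; this respects the failure model since $\beta_{ij} \leq \min\{bt_i, bt_j\}$. Each such Byzantine node shows value $x$ to $i$ and value $\overline{x}$ to $j$. Every honest node in $T_i \setminus T_j$ takes opinion $x$; every honest node in $T_j \setminus T_i$ takes opinion $\overline{x}$; of the $c - \beta_{ij}$ honest nodes in $T_i \cap T_j$, let $a$ hold opinion $x$ and the rest hold $\overline{x}$, where $a$ is a parameter to be fixed. A direct count gives
\[
|T_i^\O(x)| = (t_i - c) + \beta_{ij} + a, \qquad |T_j^\O(\overline{x})| = t_j - a.
\]
Requiring both to be at least $q t_i$ and $q t_j$ respectively reduces to the constraint $c - bt_i - \beta_{ij} \leq a \leq bt_j$, subject to the natural range $0 \leq a \leq c - \beta_{ij}$. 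The displayed inequality is exactly what guarantees this combined interval is nonempty, so a suitable integer $a$ exists and the resulting $\O$ is a fork.

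The delicate part is this last reconciliation: honest nodes in $T_i \cap T_j$ help exactly one of the two validations (pulling in opposite directions), while Byzantine intersection nodes help both. The bound $\frac{b}{1-b}(t_i+t_j)$ is precisely the threshold at which the ``double counting'' permitted by Byzantine nodes in the intersection is no longer enough to satisfy both quotas simultaneously.
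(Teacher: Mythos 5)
Your proof is correct, and it takes a genuinely different route from the paper's. The paper argues in the forward direction: it first invokes Lemma~\ref{lemma:most} (imported from \cite{chase18analysis}) to show that safety forces $|T_i \cap T_j| > b(|T_i|+|T_j|) + \beta_{ij}$, and then converts this intermediate bound into the $\frac{b}{1-b}$ form through a substantial algebraic detour: ruling out $\beta_{ij} = |T_i \cap T_j|$, writing $|T_j| = x|T_i|$, deriving $x > \frac{b}{1-2b}$, and substituting back. You prove the contrapositive and collapse that algebra to one line: since $|T_i|, |T_j| \geq c$ gives $\beta_{ij} \geq bc$ unconditionally from \eqref{eq:beta}, the hypothesis $c \leq \frac{b}{1-b}(t_i+t_j)$ rearranges immediately to $c \leq b(t_i+t_j) + \beta_{ij}$; your explicit two-sided profile with the sliding parameter $a$ then witnesses a fork, the interval condition $c - bt_i - \beta_{ij} \leq a \leq bt_j$ being exactly the additive inequality just derived (your counts $|T_i^\O(x)| = (t_i - c) + \beta_{ij} + a$ and $|T_j^\O(\overline{x})| = t_j - a$ check out). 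What your route buys: it is self-contained (no appeal to Lemma~\ref{lemma:most}), the arithmetic is far cleaner, and it makes explicit the worst-case profile on which the paper's own argument tacitly depends --- the paper's step ``by Lemma~\ref{lemma:most} and safety, the upper bound is $< q|T_j|$'' is only sound because that upper bound is attainable, i.e., precisely because a construction like yours exists. Two caveats, both at parity with the paper's level of rigor rather than genuine gaps: your construction chooses \emph{where} the Byzantine nodes sit, i.e., it quantifies over placements admissible under the failure model, whereas Definition~\ref{def:btn} fixes $H$ as part of $\T$ --- but the paper's proof requires the same adversarial-placement reading for tightness; and you gloss integrality ($\beta_{ij}$ need not be an integer, and a nonempty real interval for $a$ need not contain an integer point), which the paper glosses equally.
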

It is worth observing that Lemma~\ref{lemma:half} establishes a conservative lower bound on the size of the intersection of trust sets required by safety. It is not difficult to construct uniform QBTNs where the intersection equals $\frac{1}{2}(|T_i| + |T_j|)$. This is the case, for instance, of BTNs where $|T_i| = |T_j|$ for all $i,j \in H$.

\begin{lemma} \label{lemma:intersect}
Let $\tuple{N, H, T_i, q_i}$ be a uniform BTN. If for all $i,j \in H$, 
$$|T_i \cap T_j| > 0.25 \cdot (|T_i| + |T_j|),$$
 then 
 $$\bigcap_{i \in H} T_i \neq \emptyset.$$
\end{lemma}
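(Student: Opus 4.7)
The plan is to argue by contradiction via a double-counting / degree-sequence argument on the bipartite incidence between trusted elements and honest nodes. Suppose $\bigcap_{i \in H} T_i = \emptyset$, and for each $v \in U := \bigcup_{i \in H} T_i$ let
$$
d(v) \;=\; |\{ i \in H : v \in T_i \}|.
$$
The emptiness assumption forces $d(v) \leq |H| - 1$ for every $v \in U$, and by definition $\sum_{v \in U} d(v) = \sum_{i \in H} |T_i| =: S$.

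The bridge to the pairwise hypothesis is the identity
$$
\sum_{v \in U} d(v)^2 \;=\; \sum_{i,j \in H} |T_i \cap T_j|,
$$
obtained by counting triples $(v,i,j)$ with $v \in T_i \cap T_j$ in two ways. Applying the hypothesis $|T_i \cap T_j| > (|T_i|+|T_j|)/4$ to each ordered off-diagonal pair, and using $|T_i \cap T_i| = |T_i|$ on the diagonal, the right-hand side strictly exceeds
$$
\sum_{i \in H} |T_i| + \frac{2(|H|-1)}{4} \sum_{i \in H} |T_i| \;=\; \frac{|H|+1}{2} \cdot S.
$$

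The remaining step is to show that this lower bound is incompatible with $d(v) \leq |H|-1$ everywhere. For that I would combine the previous estimate with a sharpened upper bound on $\sum_v d(v)^2$. The naive estimate $\sum d(v)^2 \leq (|H|-1) S$ yields the contradiction directly when the family is small. To close the gap for larger $|H|$, I would reformulate the hypothesis equivalently as $|T_i \cup T_j| < 3 |T_i \cap T_j|$, which controls $|U|$ in terms of $S$; then an application of Cauchy–Schwarz (or Jensen) to the degree sequence tightens the upper bound enough to contradict the lower bound.

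The main obstacle is precisely this last step: the crude bound $d(v) \leq |H|-1$ is too generous on its own, and the argument requires exploiting that the hypothesis forces the degree sequence to concentrate near $|H|$ rather than merely be bounded by $|H|-1$. This is where the technical bookkeeping is delicate, and where I expect most of the effort of a careful proof to sit.
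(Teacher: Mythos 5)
You were right to flag the final step as the crux, but the obstacle there is not delicate bookkeeping---it is unbridgeable, because the lemma as stated (read as the pure set-system claim you are attacking) is false once $|H| \geq 4$. Take $N = H = \{1,2,3,4\}$ with $T_1 = \{1,2,3\}$, $T_2 = \{1,2,4\}$, $T_3 = \{1,3,4\}$, $T_4 = \{2,3,4\}$ (each node trusts itself, any uniform quota). Every $|T_i| = 3$ and every pairwise intersection has size $2 > 0.25 \cdot 6 = 1.5$, yet $\bigcap_{i \in H} T_i = \emptyset$. On this instance your two bounds are simultaneously tight and mutually consistent: $S = 12$, every element has degree $d(v) = 3 = |H| - 1$, so $\sum_v d(v)^2 = 36 = (|H|-1)S$, while your lower bound forces only $\sum_v d(v)^2 > \frac{|H|+1}{2} S = 30$. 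No Cauchy--Schwarz sharpening or degree-concentration argument can manufacture a contradiction on an instance where the conclusion genuinely fails; and the family of all $(k-1)$-subsets of a $k$-set shows the overlap ratio in such counterexamples tends to $\frac{1}{2}$, so no constant below $0.5$ in place of $0.25$ rescues the statement for unbounded $|H|$. What your counting \emph{does} validly prove is the lemma for $|H| \leq 3$ (where $\frac{|H|+1}{2} \geq |H|-1$ closes the argument outright), and, more generally, the quantified variant in which $0.25$ is replaced by $\frac{|H|-2}{2(|H|-1)}$---a genuinely useful salvage, and tight at $|H|=4$ where the counterexample has ratio exactly $\frac{1}{3}$.

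For comparison, the paper proves the lemma by induction on $|H|$: it picks a smallest trust set, observes that by the hypothesis every other trust set meets it in more than half of its elements, and from this concludes that the global intersection is non-empty. That last inference is exactly the invalid Helly-type conversion your approach also needs: subsets each covering more than half of a finite set pairwise intersect, but three or more of them need not share a common point (the three $2$-subsets of $\{1,2,3\}$ already show this), and the counterexample above is precisely where the induction breaks. So you should not feel you missed a trick: both your route and the paper's reach the same point where ``pairwise large overlap'' must become ``global common element,'' and at threshold $0.25$ no such conversion exists. Any repair requires strengthening the hypothesis (or bounding $|H|$), with corresponding consequences for the use of this lemma in Theorem~\ref{th:trust}.
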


\begin{theorem} \label{th:trust}
In uniform QBTNs with quotas $q \in [0.75, 0.80]$, safety implies the existence of nodes that are trusted by all honest nodes.
\end{theorem}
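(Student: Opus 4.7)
The plan is to chain together the two lemmas already established in the section (Lemma~\ref{lemma:half} and Lemma~\ref{lemma:intersect}) and verify the numerical range lines up. The main work is really just a bookkeeping step: translating the quota range $q \in [0.75, 0.80]$ into the Byzantine-fraction range and checking that the ratio $\frac{b}{1-b}$ dominates the threshold $0.25$ required by Lemma~\ref{lemma:intersect}.

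First, I would unpack the hypothesis. By the paper's convention, in a uniform QBTN each honest node $i$ assumes a Byzantine fraction $b_i = 1 - q_i \in [0, 0.25)$ (since $q_i = 1 - b_i$ and the paper fixes $q_i \in [0.75, 1]$). In the uniform case, $b_i = b$ for all $i \in H$. The assumption $q \in [0.75, 0.80]$ corresponds to $b \in [0.20, 0.25]$. A quick calculation gives
\begin{equation*}
\frac{b}{1-b} \in \Bigl[\frac{0.20}{0.80},\, \frac{0.25}{0.75}\Bigr] = \Bigl[0.25,\, \tfrac{1}{3}\Bigr],
\end{equation*}
so in particular $\frac{b}{1-b} \geq 0.25$ throughout the relevant range.

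Next I would apply Lemma~\ref{lemma:half}: since the QBTN is safe and uniform, for every pair $i, j \in H$,
\begin{equation*}
|T_i \cap T_j| > \frac{b}{1-b} \cdot (|T_i| + |T_j|) \geq 0.25 \cdot (|T_i| + |T_j|).
\end{equation*}
This is precisely the hypothesis of Lemma~\ref{lemma:intersect}, whose conclusion yields $\bigcap_{i \in H} T_i \neq \emptyset$. Any node in this common intersection is, by definition of $T_i$, trusted by every honest node, which is the statement of the theorem.

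The proof is short because almost all the technical work has been absorbed into Lemmas~\ref{lemma:half} and~\ref{lemma:intersect}. The only thing to be slightly careful about is the boundary: Lemma~\ref{lemma:half} gives a strict inequality, which is what Lemma~\ref{lemma:intersect} requires, so even at the edge $b = 0.20$ (where $\frac{b}{1-b} = 0.25$ exactly), the strictness is preserved and the argument still goes through. If anything, that edge case is the only potential obstacle worth double-checking before writing the proof up cleanly.
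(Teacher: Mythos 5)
Your proposal is correct and matches the paper's own proof, which likewise derives the theorem directly from Lemmas~\ref{lemma:half} and~\ref{lemma:intersect} together with the observation that $q \in [0.75, 0.80]$ gives $b = 1-q$ with $\frac{b}{1-b} \geq 0.25$. Your extra care about the boundary case $b = 0.20$ (where strictness of the inequality in Lemma~\ref{lemma:half} is exactly what Lemma~\ref{lemma:intersect} needs) is a sound check that the paper leaves implicit.
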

\begin{proof}
The result follows directly from Lemmas \ref{lemma:half} and \ref{lemma:intersect} and the observation that for $q \in [0.75, 0.80]$ we have $\frac{b}{1-b} \geq 0.25$.\qed
\end{proof}
If we understand decentralisation as the property of trust networks in which nodes have full freedom on whom to trust in the network, then the theorem can be interpreted as a general impossibility result for decentralised consensus based on QBTNs: if quotas are uniform, and set in a reasonable way in order to cope with the presence of Byzantine nodes in trust sets, then the existence of nodes that are trusted by everyone is a necessary condition for the safety of consensus. Furthermore, beyond limiting the choice of nodes, a (limited) set of such nodes clearly represents a dominant position and risk factor for the blockchain.

\medskip

\noindent
In general, Theorem \ref{th:trust} applies to any consensus protocol based on uniform BTNs. In particular, it applies to the Ripple consensus protocol, which uses quota ($q=0.8$).  In a way, Theorem~\ref{th:trust} provides an ex-post analytical justification to the current design of the Ripple trust network where all trust sets are required to include a same subset of nodes (cf. \cite{chase18analysis}). Currently Ripple relies on a single UNL mostly controlled by Ripple, although plans for further decentralisation are under discussion.\footnote{%
Cf. \url{https://xrpcharts.ripple.com/}.
}
% and \url{https://www.ripple.com/insights/xrp-ledger-decentralizes-expansion-55-validator-nodes/}
%}

%%%%%%%%%%%%%%%%%%%%%%%%%%%%%%%%%%%%%%%

\subsection{Safety and Quorum Intersection in BTNs}

In this and the next section we consider the general case of (vetoed) BTNs, to which Theorem \ref{th:trust} does not apply.  This more general setting applies to Stellar as presented in its white paper \cite{mazieres16stellar} where the Stellar consensus protocol does not presuppose uniformity of quotas. Actually, Stellar aims to offer open membership and freedom in choosing it's own trust networks, which, together with BFT good scalability, would yield a decentralised and efficient blockchain, an interesting value proposition. 
In such a setting an intuitive necessary condition for safety is that trust networks are sufficiently `interconnected', in the following sense. 

Let $\tuple{N, H, T_i, \Win_i}$ be a vetoed BTN. A non-empty set $Q \subseteq \N$ is called a {\em quorum} if and only if $\forall i \in Q$, $\exists C \in \Win_i$ s.t. $C \subseteq Q$. Quora are, intuitively, sets of nodes that can form an agreement. Such sets need to be in a pairwise non-empty intersection relation.

\begin{definition}[Quorum intersection \cite{mazieres16stellar}] \label{def:quorum}
A vetoed BTN enjoys quorum intersection (QI) whenever for any two sets $Q_1, Q_2 \subseteq H$, if $Q_1$ and $Q_2$ are quora, then $Q_1 \cap Q_2 \neq \emptyset$.
\end{definition}

\begin{figure}[t]
%\begin{mdframed}
\begin{center}
\begin{tikzpicture}

\node (1) {$1$};
\node (2) [below left of=1] {$2$};
\node (3) [below right of=1] {$3$};

\node (4) [right=2cm of 1]{$4$};
\node (5) [below left of=4] {$5$};
\node (6) [below right of=4] {$6$};

\draw[<->]
(2) edge (1)
(3) edge (1)
(3) edge (2);

\draw[<->]
(5) edge (4)
(6) edge (4)
(6) edge (5);
\end{tikzpicture}
\end{center}
\caption{
Example from \cite{mazieres16stellar} of a vetoed BTN lacking QI. Arrows denote which nodes each node trusts (reflexive arrows omitted). $\Win(1) = \Win(2) = \Win(3) = \set{\set{1,2,3}}$ and  $\Win(4) = \Win(5) = \Win(6) = \set{\set{4,5,6}}$.
}
\label{fig:M6}
\end{figure}

\begin{example}
  In Figure~\ref{fig:M6} the quora are
  $$
  \{1,2,3\}\quad
  \{4,5,6\}\quad
  \{1,2,3,4,5,6\}
  $$
This command game does not enjoy QI, but both of its disjoint components (with support $\set{1,2,3}$ and $\set{4,5,6}$) do.
% the subsistems $\tuple{\{1,2,3,\}, \Win_{\overline{N}}}$ 
%and $\tuple{\{4,5,6\}, \Win_{\overline{N}}}$ do.\\
Suppose instead that $\set{1,2,3,5} \in \Win(3)$, that is, $3$ also `looks at' $5$ to determine its own value. Then the system would satisfy QI with quora:
%then coalitional (sub-)structures fulfilling QI  are (with suitable $\Win$s):
$$
\{4,5,6\}\quad
\{1,2,3,4,5,6\}
$$
\end{example}

\begin{example} \label{example:unanimity}
In a BTN $\T$ where $\forall i \in N$, $\Win(i) = \set{N}$, the set $N$ of all nodes is the unique quorum, and $\T$ trivially enjoys quorum intersection.
\end{example}  

In fact, there is a close relationship between quorum intersection and the property of weak safety: 
\begin{theorem}
\label{th:QI}
A vetoed BTN is weakly safe iff any two quora intersect and such intersection contains at least one honest node.
\end{theorem}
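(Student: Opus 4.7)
The plan rests on a single structural observation: for every choice function $s$ and every honest $i \in H$, the set $U_s(i) := \bigcup_{1 \leq m} F^m_s(\set{i})$ is itself a quorum, and because the BTN is vetoed it contains $i$ (so in particular it contains an honest node). The argument is Knaster--Tarski style: since $N$ is finite, the iterates of $F_s$ stabilise at $U_s(i)$, and the fixed-point identity $F_s(U_s(i)) = U_s(i)$ unpacks exactly to the quorum condition, namely that $\forall k \in U_s(i)$, $s(k) \in \Win_k$ and $s(k) \subseteq U_s(i)$. The vetoed assumption gives $i \in s(i) \subseteq U_s(i)$.

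For the ($\Leftarrow$) direction, I would assume every pair of quora intersects in an honest node and suppose, towards a contradiction, that there is a strong fork witnessed by $i, j \in H$, function $s$, and values $x, \overline{x}$. By the observation, $U_s(i)$ and $U_s(j)$ are quora containing the honest witnesses $i$ and $j$, so by hypothesis $U_s(i) \cap U_s(j) \cap H$ contains some $k$. Membership in $U_s(i)$ forces $\O(k) = x$, while membership in $U_s(j)$ forces $\O(k) = \overline{x}$, a contradiction.

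For the ($\Rightarrow$) direction I would argue by contrapositive. Let $Q_1, Q_2$ be quora with honest witnesses $i \in Q_1 \cap H$ and $j \in Q_2 \cap H$ but $Q_1 \cap Q_2 \cap H = \emptyset$; the goal is to build a strong fork. Define $s$ as follows: for $k \in Q_1 \setminus Q_2$ use the quorum property of $Q_1$ to pick some $s(k) \in \Win_k$ with $s(k) \subseteq Q_1$, and symmetrically for $k \in Q_2 \setminus Q_1$; for $k \in Q_1 \cap Q_2$, which is Byzantine by hypothesis, Remark~\ref{rem:dictators} gives $\set{k} \in \Win_k$, so set $s(k) := \set{k}$, a choice contained in both $Q_1$ and $Q_2$; elsewhere set $s$ arbitrarily. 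A straightforward induction on the iterates of $F_s$ then yields $U_s(i) \subseteq Q_1$ and $U_s(j) \subseteq Q_2$. Now define $\O$ by giving opinion $1$ to every honest node of $Q_1$, opinion $0$ to every honest node of $Q_2$, and making each Byzantine $k$ reveal $1$ to every honest node of $U_s(i)$ and $0$ to every honest node of $U_s(j)$; this is consistent because $U_s(i) \cap U_s(j) \cap H \subseteq Q_1 \cap Q_2 \cap H = \emptyset$. By construction $\O$ is strongly forked with witnesses $i, j, s$.

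The main obstacle I anticipate is the single-function constraint on $s$ for Byzantine nodes lying in both quora: $s$ must send each such $k$ to one fixed coalition, which a priori cannot be contained in both $Q_1$ and $Q_2$. The escape is precisely the Byzantine freedom recorded in Remark~\ref{rem:dictators} — the singleton $\set{k}$ is a winning coalition for any Byzantine $k$ and trivially lies in $Q_1 \cap Q_2$, so it serves on both sides simultaneously. The other delicate point is the strong-fork condition for Byzantine members of $U_s(i)$: using that Byzantine nodes may reveal different opinions to different honest observers, and that $U_s(i)$ and $U_s(j)$ share no honest node by hypothesis, the opinion assignment is well-defined.
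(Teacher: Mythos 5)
Your proposal is correct, and for the direction the paper actually proves it coincides with the paper's argument: the right-to-left implication is established there exactly as you establish it, by contraposition from a strong fork, observing that $C_1 = \bigcup_{1 \leq m} F_s^m(\set{i})$ and $C_2 = \bigcup_{1 \leq m} F_s^m(\set{j})$ are quora containing the honest witnesses (by vetoedness, read as ``$i$ belongs to every coalition in $\Win_i$'' --- the same reading both you and the paper need for $i \in s(i)$), whose intersection can contain only Byzantine nodes. Where you genuinely add content is the left-to-right direction: the paper dismisses it as ``straightforwardly proven by contraposition'' with no construction, whereas you actually build the witness function $s$ and the strongly forked profile, and you correctly isolate the one non-obvious point --- that the single function $s$ must serve nodes of $Q_1 \cap Q_2$ on both sides simultaneously --- and resolve it via Remark~\ref{rem:dictators}, since such nodes are Byzantine and the singleton $\set{k}$ is a winning coalition contained in both quora. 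One caveat you should state explicitly: your construction presupposes honest witnesses $i \in Q_1 \cap H$ and $j \in Q_2 \cap H$, but the bare negation of the theorem's right-hand side also admits quora with no honest members at all (two distinct Byzantine singletons $\set{k_1}, \set{k_2}$ are disjoint quora by Remark~\ref{rem:dictators}), and from such quora no strong fork can be manufactured, since strong forks require honest witnesses by definition; so the stated equivalence degenerates whenever $|B| \geq 2$ unless quora are restricted to those containing honest nodes (compare Definition~\ref{def:quorum}, which quantifies only over $Q_1, Q_2 \subseteq H$). This is a defect in the theorem's phrasing rather than in your argument --- your proof is precisely a proof of the repaired statement, and indeed more complete than the paper's own treatment of this direction --- but flagging the restriction would make your write-up airtight.
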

Clearly, nodes in a BTN cannot know which nodes are Byzantine so their best effort to guarantee weak safety is to guarantee QI is not violated.\footnote{Although a BTN could be complemented by a failure model consisting of a set sets of possible Byzantine nodes representing the possible failure scenario that nodes should consider (cf. \cite{garcaprez2018OPODISStellar}).}

%%%%%%%%%%%%%%%%%%%%%%%%%%%%%

\subsection{The intractability of maintaining quorum intersection}

Quorum intersection is in fact assumed by all the existing correctness analyses of Stellar \cite{mazieres16stellar,garcaprez2018OPODISStellar}.
It is furthermore stressed in \cite[p. 9]{mazieres16stellar} that: ``[\ldots]
it is the responsibility of each node $i$ to ensure $\Win_i$ [notation adapted] does not violate quorum intersection.''

The key question, from a safety perspective, becomes therefore whether single nodes can reasonably be tasked with maintaining QI. Apart from incentive issues, which have also been flagged \cite{kim2019IEEEPSB}, we argue that this is a problematic requirement from a merely computational standpoint. This might not be an issue in the current, small-scale, Stellar configuration (although an instance of QI failure has been recently reported \cite{lokhava19fast}), but it is something to be considered in a path towards full decentralisation with a full-scale number of nodes and validators. As our analysis below shows, maintaining QI is a computationally intractable problem.

We present two results. First we show that deciding whether a given BTN satisfies QI is intractable.\footnote{An equivalent result has been recently presented in~\cite{Lachowski2019QInp}. That paper provides a proof of NP-completeness (via reduction from the Set Splitting Problem) of the complementary problem for which we prove coNP-completeness (via reduction from 3SAT).}
Second, we show that deciding whether adding a new trust set with winning coalitions preserves QI on a given BTN---arguably the decision problem that nodes need to solve when linking to the Stellar network---is also computationally intractable (again coNP-hard). We argue that these results point to a possible computational bottleneck for the scalability of the consensus model of Stellar.

\medskip

We start by defining the problem consisting of deciding whether QI holds in a given BTN.

\medskip

\noindent
\textsc{Quorum-Intersection}
\begin{myquote}
  \textbf{Input:} A BTN~$\T = \langle N, \Win \rangle$
    where the sets~$\Win(i)$ for~$i \in N$ are listed explicitly.\footnote{For the purpose of this and the following result we do not need to take into consideration the $H$ and $T_i$ elements of a BTN (Definition \ref{def:btn}). We therefore omit them for conciseness.}
  
  \textbf{Question:} Is it the case that for each two quora~$Q_1,Q_2$, $Q_1 \cap Q_2 \neq \emptyset$?
\end{myquote}

\begin{theorem}
\label{prop:qi-conp-complete}
\textsc{Quorum-Intersection} is coNP-complete.
\end{theorem}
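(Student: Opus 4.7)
My plan is to establish coNP-completeness via the two standard components: membership and hardness. For membership, the complement of \textsc{Quorum-Intersection} lies in NP, with certificate a pair of non-empty disjoint sets $Q_1, Q_2 \subseteq N$; verifying that each $Q_k$ is a quorum reduces to checking, for every $i \in Q_k$, whether some $C \in \Win(i)$ satisfies $C \subseteq Q_k$, which runs in time polynomial in $|\T|$ since the winning coalitions are listed explicitly.

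For hardness, I would reduce from 3SAT to the complement problem (``exists two disjoint quora''). Given a 3CNF $\phi = c_1 \wedge \cdots \wedge c_m$ over variables $x_1, \ldots, x_n$, I construct a BTN with nodes
\[
N = \set{T_i, F_i, M^1_i, M^2_i : 1 \le i \le n} \cup \set{C_j : 1 \le j \le m} \cup \set{A^1, A^2},
\]
where $T_i, F_i$ encode the literals of $x_i$ (setting $L(x_i)=T_i,\ L(\neg x_i)=F_i$), $C_j$ is a clause node, the $M^k_i$ are auxiliary ``choice'' nodes, and $A^1, A^2$ are two deliberately asymmetric anchor nodes. Set
\begin{align*}
\Win(A^1) &= \set{\set{C_1, \ldots, C_m, M^1_1, \ldots, M^1_n}}, \\
\Win(A^2) &= \set{\set{M^2_1, \ldots, M^2_n}}, \\
\Win(C_j) &= \set{\set{L(\ell)} : \ell \in c_j}, \\
\Win(M^k_i) &= \set{\set{T_i}, \set{F_i}} \quad (k \in \set{1,2}), \\
\Win(T_i) = \Win(F_i) &= \set{\set{A^1}, \set{A^2}}.
\end{align*}
The forward direction is routine: given a satisfying assignment $\alpha$, let $Q_1 = \set{A^1} \cup \set{C_j}_{j} \cup \set{M^1_i}_{i} \cup \set{L(\ell) : \alpha \models \ell}$ and $Q_2 = \set{A^2} \cup \set{M^2_i}_{i} \cup \set{L(\ell) : \alpha \not\models \ell}$. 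These are disjoint by construction, and each $C_j$'s quorum condition in $Q_1$ is met by the satisfying literal of $c_j$ in $Q_1$.

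The backward direction is the main obstacle. From two disjoint quora $Q_1, Q_2$, I would argue: (i) every non-empty quorum contains at least one anchor, since literal, clause, and choice nodes all ultimately require an $A^k$ through their winning coalitions; (ii) disjointness then forces WLOG $A^1 \in Q_1$ and $A^2 \in Q_2$, whence $\Win(A^1)$ pulls every $C_j$ and $M^1_i$ into $Q_1$ while $\Win(A^2)$ pulls every $M^2_i$ into $Q_2$; (iii) the $M^1_i$ and $M^2_i$ constraints combined with disjointness force, for each $i$, exactly one of $T_i, F_i$ into $Q_1$ and the other into $Q_2$. Setting $\alpha_i = 1$ iff $T_i \in Q_1$ then translates the coverage condition ``some $L(\ell) \in Q_1$ for $\ell \in c_j$'' into the statement that $\alpha$ satisfies every $c_j$. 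The asymmetry between $\Win(A^1)$ (heavy, enforcing clause coverage) and $\Win(A^2)$ (light, only enforcing a choice per variable) is the critical design choice: a symmetric two-copies construction would only encode the strictly weaker ``two disjoint covering sets'' condition, which does not coincide with 3SAT, since a cover may contain both $T_i$ and $F_i$; shifting the coverage burden entirely onto $A^1$ makes $Q_2$'s role purely to certify consistency of $\alpha$ by absorbing the opposite literal for each variable.
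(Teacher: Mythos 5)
Your proposal is correct and follows essentially the same route as the paper's proof: coNP membership by guessing and verifying two disjoint quora, and coNP-hardness by reduction from 3SAT (equivalently, UNSAT) built on two anchor nodes whose unique slices pull in, respectively, the clause nodes and per-variable choice gadgets, with literal nodes pointing back to the anchors so that disjointness of the two quora forces a consistent satisfying assignment. Your deviations are cosmetic---your $A^1, A^2$, $M^k_i$, and $T_i/F_i$ play exactly the roles of the paper's $z_1, z_0$, $y_i$, and $p_i/n_i$---though duplicating the choice nodes on both anchors lets you derive the exactly-one-literal-per-side split directly from disjointness, where the paper instead appeals to minimality of the quora (its ``core quora'' assumption).
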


%\begin{proposition}
%\textsc{Quorum-Intersection} is $\mathsf{co{\text -}NP}$-complete.
%\end{proposition}
%\begin{proof}[Sketch]
%Consider the complement problem \textsc{co-Quorum-Intersection}: given a set of nodes $N$ and a set of sets $\Q^\C$ (the %quora of $\C$), are there $k$ or more sets in $\Q^\C$ which are pairwise disjoint? The salient instance is of course for $k = %2$. 
%We show that \textsc{co-Quorum-Intersection} is $\mathsf{NP}$-complete.
%It is clearly in $\mathsf{NP}$. Given $k$ sets it can be verified in polynomial time whether they are pairwise disjoint. For the $%\mathsf{NP}$-hardness of \textsc{co-Quorum-Intersection} it suffices to observe that the problem corresponds to the problem %known as `set packing', which is known to be $\mathsf{NP}$-hard \cite{karp72reducibility}.
%\end{proof}

%\noindent

The intractability result of Theorem~\ref{prop:qi-conp-complete} says that it may
be computationally hard, in practice, to check QI.
Such a result is  robust in the sense that the related problem of checking whether QI holds after 
the insertion of one new slide  by a node into a system that already satisfies QI,  is also coNP-complete.
(actually coNP-hard).\\

\noindent
\textsc{Slice-Addition-Quorum-Intersection}
\begin{myquote}
  \textbf{Input:} Two BTNs~$\T = \langle N, \Win \rangle$
    and~$\T' = \langle N, \Win' \rangle$,
    such that~$\T'$ satisfies QI,
    and such that~$\T$ is obtained from~$\T'$ by
    adding one single slice to~$\Win'(i)$ for some~$i \in N$,
    where the sets~$\Win(i)$ and~$\Win'(i)$
    for all~$i \in N$ are listed explicitly.
  
  \textbf{Question:} Is it the case that for each two quora~$Q_1,Q_2$ of $\T$~$Q_1 \cap Q_2 \neq \emptyset$?
\end{myquote}

\begin{theorem}
\label{prop:sliceadd-conp-complete}
\textsc{Slice-Addition-Quorum-Intersection} is coNP-complete.
\end{theorem}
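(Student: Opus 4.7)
The plan is to prove both coNP-membership and coNP-hardness, mirroring the structure of Theorem \ref{prop:qi-conp-complete}.

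For coNP-membership, the same kind of certificate works as before: two quora $Q_1, Q_2$ of $\T$ with $Q_1 \cap Q_2 = \emptyset$ witness a NO-instance and can be verified in polynomial time against the explicitly listed winning coalitions $\Win(i)$. The additional constraints on the input (that $\T'$ satisfy QI and that $\T$ differ from $\T'$ by exactly one slice) are themselves polynomial-time checkable, but since they are promised by the problem definition they play no role in membership.

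For coNP-hardness, I plan to refine the reduction from 3SAT underlying Theorem \ref{prop:qi-conp-complete} so that the BTN $\T_\phi$ it produces exposes a distinguished ``critical'' slice $s^* \in \Win_\phi(i^*)$. The aim is that removing $s^*$ from $\T_\phi$ yields a BTN $\T'_\phi$ that trivially satisfies QI (regardless of $\phi$), while restoring $s^*$ recovers $\T_\phi$, whose QI status is equivalent to the unsatisfiability of $\phi$. Then $(\T'_\phi, \T_\phi)$ is a valid instance of \textsc{Slice-Addition-Quorum-Intersection}, and its answer coincides with ``$\phi$ is unsatisfiable'', giving the desired polynomial-time reduction from 3UNSAT.

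A natural way to engineer the critical slice is to wrap the 3SAT gadget in a gating mechanism: introduce an anchor node $d$ and route every winning coalition of every node through $d$, \emph{except} $s^*$ itself, which supplies $i^*$ with a single bypass. Without $s^*$, every quorum must contain $d$ (any would-be quorum not containing $d$ would leave every non-$i^*$ node with no slice entirely inside it), so QI holds trivially via intersection at $d$. With $s^*$ added, the bypass unlocks $d$-free participation by $i^*$, and the dormant literal and clause dependencies, pre-configured in $\Win'_\phi$, reconstruct the 3SAT encoding from Theorem \ref{prop:qi-conp-complete} so that non-intersecting quora exist iff $\phi$ is satisfiable.

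The technical crux lies in ensuring that a \emph{single} added slice suffices to reactivate the full 3SAT gadget: since adding one slice at $i^*$ only widens $i^*$'s own option menu, the remaining winning coalitions must be pre-configured so that $i^*$'s new freedom cascades through the literal and clause nodes, turning on dependencies that, in concert, realise the original 3SAT encoding. The main care will be in verifying that this cascade neither introduces spurious non-intersecting quora in $\T'_\phi$ (which would invalidate the promise that $\T'_\phi$ satisfies QI) nor fails to reproduce the full dependence on $\phi$ in $\T_\phi$. Once this calibration is in place, the reduction is polynomial in $|\phi|$ and the theorem follows.
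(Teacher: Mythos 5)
Your coNP-membership argument is fine and matches the paper. The gap is in the hardness gadget: the anchor-node construction is self-defeating. Suppose, as you propose, that in $\T'$ \emph{every} slice of \emph{every} node contains the anchor $d$, and $\T$ adds the single $d$-free bypass slice $s^*$ at $i^*$. A QI violation in $\T$ requires disjoint quora $Q_1, Q_2$; since $\T'$ satisfies QI, at least one of them, say $Q_1$, must use the new slice, so $i^* \in Q_1$ and $s^* \subseteq Q_1$. But quorumhood is a local condition on each member: every node $v \in Q_1 \setminus \{i^*\}$ must have one of its \emph{own} pre-existing slices inside $Q_1$, and by your gating all of those contain $d$, so $d \in Q_1$; likewise $Q_2$ consists entirely of nodes using old slices, so $d \in Q_2$, and the two quora meet at $d$. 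The only escape is the degenerate case $s^* = \{i^*\}$, which makes $\{i^*\}$ a quorum and renders the answer independent of $\varphi$ either way. In short, the ``cascade'' you invoke cannot exist: adding one slice at $i^*$ enlarges only $i^*$'s option menu and cannot activate dormant $d$-free dependencies at other nodes, while pre-installing those $d$-free slices in $\T'$ would reintroduce $\varphi$-dependent quora there and break the promise that $\T'$ satisfies QI. Your demand that $\T'$ be \emph{trivially} safe, regardless of $\varphi$, is exactly what makes the construction collapse.

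The paper's proof avoids this by making $\T'$ satisfy QI for a logical rather than structural reason. Assume without loss of generality that $\varphi[x_1 \mapsto 1]$ is unsatisfiable (pad an arbitrary 3CNF input with a fresh variable $x_1$ and a clause forcing it; this preserves satisfiability). Build exactly the command game of Theorem~\ref{prop:qi-conp-complete} as $\T$, and obtain $\T'$ by deleting the single slice $\{y_1, n_1\}$ from $\Win(y_1)$. In $\T'$ any quorum through $y_1$ must contain $p_1$, which in the quora-to-assignments correspondence of the first reduction pins down the truth value of $x_1$; since the restricted formula is unsatisfiable, the correctness argument of Theorem~\ref{prop:qi-conp-complete} yields QI for $\T'$ \emph{by construction} (no separate verification needed), while restoring the slice recovers the full equivalence between QI failure of $\T$ and satisfiability of $\varphi$. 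The lesson is that the critical slice should not gate the whole gadget; it should merely re-open one branch of one variable gadget whose closure provably lands the formula in an unsatisfiable region.
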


%%%%%%%%%%%%%%%%%%%%%%

\section{Quantifying Influence on Consensus in BTNs}
\label{sec:influence}

Theorem \ref{th:trust} showed that, in uniform QBTNs, safety implies the existence of nodes that are trusted by all honest nodes. While this can definitely be interpreted as a high level of centralisation required by safety, it is worth trying to precisely quantify the effect that the existence of all-trusted nodes has on consensus. In PoW and PoS protocols it is straightforward, at least by first approximation, to understand what the influence of each node is on the consensus process: each node will be able to determine a fraction of blocks corresponding to the node's share of total hashing power (PoW) or of total stakes (PoS). For consensus based on voting on trust structures, like in Ripple and Stellar, quantifying nodes' influence in a principled way is not obvious.  This section proposes a methodology for such quantification that leverages the theory of voting games. 

\subsection{Influence Matrices}

Within a BTN, a simple game is associated to each honest node. The {\em Penrose-Banzhaf index} \cite{penrose46elementary,banzhaf65weighted} of $j$ in the simple game $\tuple{T_i, \Win_i}$ of node $i$ is
\begin{align}
\B_i(j) & = \frac{1}{2^{n-1}} \sum_{C \subseteq N \backslash \set{j}} v(C \cup \set{j}) - v(C), \label{eq:b}
\end{align}
which measures the probability that node $j \in T_i$ is pivotal in a coalition to determine $i$'s opinion, assuming all other agents in $T_i$ have uniformly random opinions. Indeed $v(C) = 1$ if $C$ allows a decision to be made, and $v(C)=0$ otherwise. Informally speaking, the sum counts all the cases in which $j$ is needed to enable the decision of $i$, normalised on the possible $2^{n-1}$ coalitions without $j$.

For any agent $j \not\in T_i$ we stipulate $\B(j) = 0$, as nodes that $i$ does not trust cannot influence $i$'s opinion directly (they are `dummy agents' in the game-theoretic jargon). Byzantine nodes are assigned degenerate simple games containing a singleton winning coalition containing themselves (cf. Remark~\ref{rem:dictators} above). Byzantine agents cannot be influenced: for all  $i \in \N \backslash H$ from the degenerate simple game associated to $i$ we have $\B_i(i)=1$, and $\B_i(j)=0$ for each $j \neq i$. The normalised version of the Penrose-Banzhaf index $\NB_i(j)$  is:
\begin{align}
\NB_i(j) & = \frac{\B_i(j)}{\sum_{k \in N} \B_i(k)} \label{eq:bn}
\end{align}
Given a BTN, we associate to each honest node $i$ a vector  $[\NB_i(1), \ldots, \NB_i(n) ]$ of normalized Penrose-Banzhaf indices capturing the influence that each node has on $i$. Clearly $\sum_{j \in N} \NB_i(j) = 1$ and $\NB(j) > 0$ only if $j \in T_i$, for any $i \in N$. Notice that the vector of a Byzantine node $i$ is degenerate: $\NB_i(i)=1$ and $\NB_i(j)=0$ for each $j \neq i$.

\medskip

It follows that each BTN $\T$ can be described by a stochastic $N \times N$ matrix
\begin{align*}
I(\T) & = 
\begin{bmatrix}
    \I_{11} & \I_{12} & \I_{13} & \dots  & \I_{1n} \\
   \I_{21} & \I_{22} & \I_{23} & \dots  & \I_{2n} \\
    \vdots & \vdots & \vdots & \ddots & \vdots \\
    \I_{n1} & \I_{n2} & \I_{n3} & \dots  & \I_{nn}
\end{bmatrix}
\end{align*}
where $\I_{ij}$ denotes the normalized Penrose-Banzhaf index $\NB_i(j)$ of node $j$ in the simple game associated to $i$. We call such matrix $\I(\T) = [\I_{ij}]_{i,j \in N}$ the {\em influence matrix} (of $\T$). We will drop reference to $\T$ when not needed.
The matrix encodes the influence that each node has on each other. Matrices of this type have a long history in the mathematical modeling of influence in economics and the social sciences dating back to \cite{french56formal,DeGroot}, and have recently received renewed attention \cite{jackson08social}.\footnote{See also \cite{proskurnikov17tutorial,proskurnikov18tempo} for an overview of such models.} Similar matrices, but based on the Shapley-Shubik power index \cite{shapley54method} instead of the Penrose-Banzhaf one, have more recently been studied in \cite{hu03authority,hu03authorityB}.\footnote{For a comparison between these two power indeces we refer the reader to \cite{felsenthal98measurement}.}

\begin{example}\label{ex:influence}
Consider the following BTN with no Byzantine nodes and consisting of 6 agents all having a same set of 5 agents as trust set: $N = H = \set{1,2,3,4,5,6}$, $T_i = \set{1,2,3,4,5}$ for all $i \in N$ and $q_i = 0.8$ for all $i \in N$. By \eqref{eq:b} and \eqref{eq:bn}  for each $i \in \set{1,2,3,4,5, 6}$ we have $\B_j(i) = \frac{2}{8}$ and $\NB_j(i) = \frac{1}{5}$, for each node $j \in \set{1,2,3,4,5}$. The influence matrix describing this BTN consists of 6 identical row vectors
$
\begin{bmatrix}
    \frac{1}{5} &  \frac{1}{5} &  \frac{1}{5} &  \frac{1}{5}  &  \frac{1}{5} & 0.
\end{bmatrix}
$

Consider now a variant of the above BTN where node $5$ is Byzantine. The influence matrix describing this variant consists of 5 identical vectors 
$
\begin{bmatrix}
    \frac{1}{5} &  \frac{1}{5} &  \frac{1}{5} &  \frac{1}{5}  &  \frac{1}{5} & 0.
\end{bmatrix}
$
for the rows corresponding to nodes $1-4$ and $6$, and the degenerate row vector
$
\begin{bmatrix}
    0 &  0 &  0 &  0  & 1 & 0.
\end{bmatrix}
$
for the row of node $5$. That is, all the nodes in $T_i$ have the same influence on honest nodes, but no honest node influences $5$. 
\end{example}
%%%%%%\abb{Link to the extended example if any}

\subsection{Limit Influence}

Pushing further the game theoretic framework, by using an influence matrix we can pinpoint the influence of any node $j$ on determining node $i$'s opinion: 
\begin{description}
\item{$\I_{ij}$} represents the probability that $j$ can directly sway $i$ to validate a value $x$. This is $j$'s direct influence on $i$.

\item{$\I^2_{ij}$} represents the probability that $j$ can sway $i$'s opinion in two steps, by swaying the opinion of an intermediate node $k$ which in turn sways $i$'s opinion directly. This is $j$'s indirect ($2$-step) influence on $i$.

\item{$\I^k_{ij}$} more generally represents $j$'s indirect ($k$-step) influence on $i$. 
\end{description}
So the influence (direct or indirect) of $j$ on $i$ in a BTN is given by the total probability of all ways in which $j$ can sway $i$'s opinion. Formally this amounts to
$
\lim_{t \to \infty} (\I^t)_{ij}, 
$
provided such limit exists. In yet other words, this denotes the likelihood that $j$ is able to determine the value $i$ validates. 

We are then in the position to quantify what the influence is of each node on every other node by taking the limit of the power of the influence matrix of the BTN $\T$, that is:
\begin{align}
\I(\T)^\infty & = \lim_{t \to \infty} \I(\T)^t \label{eq:limit}
\end{align}
If the limit matrix in \eqref{eq:limit} exists, we say that the influence matrix $\I(\T)$ is {\em regular}. We say that it is {\em fully regular} when its limit matrix exists and it is such that all rows are identical.\footnote{The `regularity' and `full regularity' terminology are borrowed from \cite{gantmacher00theory} and \cite{proskurnikov17tutorial}.} Intuitively, regularity means that it is possible to precisely quantify the influence of each node on each other node; full regularity means that every node has the same influence on every other node.

\begin{example}
Consider again the two BTNs introduced in Example \ref{ex:influence}. In the first case, where all nodes are honest, all nodes belonging to some trust set have positive and---given the symmetry built in the example---the same influence:
\[
\begin{bmatrix}
    \frac{1}{5} &  \frac{1}{5} &  \frac{1}{5} &  \frac{1}{5}  &  \frac{1}{5} & 0 \\
    \frac{1}{5} &  \frac{1}{5} &  \frac{1}{5} &  \frac{1}{5}  &  \frac{1}{5} & 0 \\
    \frac{1}{5} &  \frac{1}{5} &  \frac{1}{5} &  \frac{1}{5}  &  \frac{1}{5} & 0 \\
    \frac{1}{5} &  \frac{1}{5} &  \frac{1}{5} &  \frac{1}{5}  &  \frac{1}{5} & 0 \\
    \frac{1}{5} &  \frac{1}{5} &  \frac{1}{5} &  \frac{1}{5}  &  \frac{1}{5} & 0 \\
    \frac{1}{5} &  \frac{1}{5} &  \frac{1}{5} &  \frac{1}{5}  &  \frac{1}{5} & 0
\end{bmatrix}
=
\begin{bmatrix}
    \frac{1}{5} &  \frac{1}{5} &  \frac{1}{5} &  \frac{1}{5}  &  \frac{1}{5} & 0 \\
    \frac{1}{5} &  \frac{1}{5} &  \frac{1}{5} &  \frac{1}{5}  &  \frac{1}{5} & 0 \\
    \frac{1}{5} &  \frac{1}{5} &  \frac{1}{5} &  \frac{1}{5}  &  \frac{1}{5} & 0 \\
    \frac{1}{5} &  \frac{1}{5} &  \frac{1}{5} &  \frac{1}{5}  &  \frac{1}{5} & 0 \\
    \frac{1}{5} &  \frac{1}{5} &  \frac{1}{5} &  \frac{1}{5}  &  \frac{1}{5} & 0 \\
    \frac{1}{5} &  \frac{1}{5} &  \frac{1}{5} &  \frac{1}{5}  &  \frac{1}{5} & 0
\end{bmatrix}^2
=
\lim_{t \to \infty}
    \begin{bmatrix}
    \frac{1}{5} &  \frac{1}{5} &  \frac{1}{5} &  \frac{1}{5}  &  \frac{1}{5} & 0 \\
    \frac{1}{5} &  \frac{1}{5} &  \frac{1}{5} &  \frac{1}{5}  &  \frac{1}{5} & 0 \\
    \frac{1}{5} &  \frac{1}{5} &  \frac{1}{5} &  \frac{1}{5}  &  \frac{1}{5} & 0 \\
    \frac{1}{5} &  \frac{1}{5} &  \frac{1}{5} &  \frac{1}{5}  &  \frac{1}{5} & 0 \\
    \frac{1}{5} &  \frac{1}{5} &  \frac{1}{5} &  \frac{1}{5}  &  \frac{1}{5} & 0 \\
    \frac{1}{5} &  \frac{1}{5} &  \frac{1}{5} &  \frac{1}{5}  &  \frac{1}{5} & 0
\end{bmatrix}^t 
\]
In the second case, where node $5$ is Byzantine, the only node having positive influence (total influence $1$ in this example) is precisely $5$:
\[
\lim_{t \to \infty}
    \begin{bmatrix}
    \frac{1}{5} &  \frac{1}{5} &  \frac{1}{5} &  \frac{1}{5}  &  \frac{1}{5} & 0 \\
    \frac{1}{5} &  \frac{1}{5} &  \frac{1}{5} &  \frac{1}{5}  &  \frac{1}{5} & 0 \\
    \frac{1}{5} &  \frac{1}{5} &  \frac{1}{5} &  \frac{1}{5}  &  \frac{1}{5} & 0 \\
    \frac{1}{5} &  \frac{1}{5} &  \frac{1}{5} &  \frac{1}{5}  &  \frac{1}{5} & 0 \\
    0 & 0 & 0 & 0 & 1 & 0 \\
    \frac{1}{5} &  \frac{1}{5} &  \frac{1}{5} &  \frac{1}{5}  &  \frac{1}{5} & 0
\end{bmatrix}^t 
= 
\begin{bmatrix}
    0 & 0 & 0 & 0 & 1 & 0 \\
    0 & 0 & 0 & 0 & 1 & 0 \\
    0 & 0 & 0 & 0 & 1 & 0 \\
    0 & 0 & 0 & 0 & 1 & 0 \\
    0 & 0 & 0 & 0 & 1 & 0 \\
    0 & 0 & 0 & 0 & 1 & 0 
\end{bmatrix}
\]
In other words, the only node having influence on which values will be validated by other nodes, and therefore which values will be agreed upon, is the Byzantine node.
\end{example}

%%%%%%%%%%%%

\subsection{Limit Influence in Ripple and Stellar}

Theorem \ref{th:trust} established that in uniform QBTN, and therefore Ripple, safety requires centralisation in the sense of requiring a non-empty set of nodes trusted by all other nodes. While this does not apply in general to Stellar, recent studies have highlighted that Stellar enjoys a similar level of centralisation.\footnote{Data analysis of the current Stellar network has shown \cite{kim2019IEEEPSB} that one of the three Stellar foundations validators is included in all trust sets. If we treat the Stellar foundation to be operating as one node, Stellar satisfies {\em de facto} the same level of centralisation that we have shown is analytically required for Ripple. }

Here we put the above methodology at work to study limit influence in centralised BTNs, that is BTNs where nodes exist that are trusted by all nodes. We show (Theorem \ref{prop:Ireg}) that: the existence of nodes trusted by all nodes makes it possible to establish limit influence (first claim); this limit influence is such that every node has the same limit influence on every other node (second claim) when at most one Byzantine node exists in the BTN; but if only just one all-trusted node trusts a Byzantine node, no honest node has limit influence on any other honest node (third claim). That is, in a centralised BTN the power of determining consensus values is all in the hands of Byzantine nodes.

\begin{theorem}
\label{prop:Ireg}
Let $\T$ be a BTN. If $\T$ is such that $\bigcap_{i \in H} T_i \neq \emptyset$ then:
\begin{enumerate}[1)]
\item $\I(\T)$ is regular;
\item $\I(\T)$ is fully regular if in addition $\T$ is such that $|B| \leq 1$;
\item and, if there exists $j \in \left( \bigcap_{i \in H} T_i \right) \cap H$ such that $T_j \cap B \neq \emptyset$ then for all $j,k \in H$, $\I(\T)_{jk}^\infty = 0$.  
\end{enumerate}
\end{theorem}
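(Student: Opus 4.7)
The plan is to regard $\I(\T)$ as the transition matrix of a Markov chain on $N$ and exploit its block structure. Ordering nodes so that honest ones come first, the definition of $\I$ on Byzantine nodes gives
\[
\I(\T) = \begin{pmatrix} A & E \\ 0 & \mathrm{Id}_{|B|} \end{pmatrix},
\]
so $\I(\T)^t = \begin{pmatrix} A^t & \bigl(\sum_{s<t} A^s\bigr)\, E \\ 0 & \mathrm{Id}_{|B|} \end{pmatrix}$, and regularity of $\I(\T)$ reduces to the existence of $\lim_t A^t$. Fix $c \in \bigcap_{i\in H} T_i$; since $c$ belongs to every honest trust set and, under the standing assumption that trust-set members are non-dummy in the associated simple games, $\I(\T)_{i,c} > 0$ for every $i\in H$.

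For Claim~1 I would split on whether $c$ is Byzantine or honest. If $c\in B$, every row of $A$ loses positive mass $\I(\T)_{i,c}$ to the column of $c$ in $E$, so $\|A\|_\infty \leq 1 - \min_{i\in H}\I(\T)_{i,c} < 1$, giving $\rho(A)<1$ and $A^t\to 0$. If $c\in H$, the convention $c\in T_c$ yields $\I(\T)_{c,c}>0$; since every honest row puts positive mass on $c$, any essential (closed recurrent) class containing honest nodes must contain $c$, and the self-loop at $c$ makes it aperiodic, while the Byzantine singletons are trivially aperiodic. Either way every essential class of $\I(\T)$ is aperiodic, which is equivalent to convergence of $\I(\T)^t$. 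For Claim~2 the same observation shows there is at most one essential class among honest nodes; combined with $|B|\leq 1$ this leaves only a short case analysis ($|B|=0$; the sole Byzantine coincides with $c$; the sole Byzantine differs from an honest $c$ but is reached from the honest community) to conclude uniqueness of the essential class, hence identical rows in $\lim_t \I(\T)^t$ and full regularity.

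For Claim~3, pick $j^\ast \in \bigl(\bigcap_{i\in H}T_i\bigr)\cap H$ and $b \in T_{j^\ast}\cap B$. Every honest $i$ satisfies $\I(\T)_{i,j^\ast}>0$, and $\I(\T)_{j^\ast,b}>0$, so from any honest state the absorbing state $b$ is reached in exactly two steps with probability at least $\I(\T)_{i,j^\ast}\cdot\I(\T)_{j^\ast,b}$, which is bounded away from $0$ uniformly in $i\in H$. Consequently every row sum of $A^2$ is strictly below $1$, giving $\|A^2\|_\infty<1$, $\rho(A)^2 = \rho(A^2)<1$, and $A^t\to 0$; this is exactly the desired $\I(\T)^\infty_{jk}=0$ for all $j,k\in H$. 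The most delicate step I anticipate is the case analysis of Claim~2 when $|B|=1$ and $c\in H$: one must rule out an honest closed subclass coexisting with the Byzantine absorbing singleton, and the argument should hinge on the universal trust in $c$ driving honest mass toward the Byzantine node rather than into an isolated honest recurrent set.
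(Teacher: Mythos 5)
Your Claims 1) and 3) are sound, and are in fact carried out in a more self-contained way than in the paper: where the paper's proof delegates to cited results on influence matrices (regularity if every closed strongly connected component of the influence graph is aperiodic; full regularity if there is exactly one such component), you obtain the same conclusions directly, via the block decomposition and the contraction estimates $\lVert A \rVert_\infty < 1$ (when $c \in B$) and $\lVert A^2 \rVert_\infty < 1$ (Claim 3), which is cleaner and quantitative. Your explicit ``non-dummy'' caveat---that membership of $c$ in every $T_i$, or of a Byzantine $b$ in $T_{j^\ast}$, must translate into a \emph{positive} Penrose--Banzhaf entry---is an assumption the paper also needs but makes only tacitly (``by assumption there exist nodes that influence all other honest nodes''), and the same goes for the self-loop $\I_{cc}>0$ used for aperiodicity; flagging these is a point in your favor, not a defect.

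The genuine gap is exactly where you anticipated it: Claim 2) in the case $|B|=1$, $c \in H$, with the Byzantine node not reached from $c$'s community. This case cannot be closed, because universal trust in $c$ does \emph{not} drive honest mass toward the Byzantine node: nothing in the hypotheses makes the Byzantine node accessible from the closed class containing $c$. Concretely, take $H = \set{1,2}$, $B = \set{3}$, $T_1 = \set{1}$ with $\Win_1 = \set{\set{1}}$, and $T_2 = \set{1,2,3}$ with the unanimity coalition $\Win_2 = \set{\set{1,2,3}}$. Then $\bigcap_{i \in H} T_i = \set{1} \neq \emptyset$ and $|B| = 1$, yet the chain has two aperiodic closed classes, $\set{1}$ and $\set{3}$: the limit exists (so Claim 1 holds), but row $1$ converges to the unit vector on node $1$ while row $2$ converges to the mixture putting mass $\frac{1}{2}$ on each of nodes $1$ and $3$, so the rows are not identical and full regularity fails. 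Notably, the paper's own proof does not close this case either: its full-regularity argument for $|B| = 1$ carries the extra hypothesis that the Byzantine node influences some all-trusted node ($E(\bigcap_{i \in H} T_i) \cap B \neq \emptyset$ in its notation), and the complementary situation is lumped into the case where only plain regularity is derived. So the obstruction you identified is real---Claim 2) as stated requires this additional accessibility hypothesis---and your proof, which is otherwise aligned with the paper's essential-class analysis and more rigorous where it applies, correctly isolates the one case that neither you nor the paper can actually establish.
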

Again, it is worth noticing that this is a general protocol-independent result: it concerns all protocols working on centralized BTNs. In particular, it applies to the setup of the Ripple trust network under the assumption of safety (by Theorem \ref{th:trust}) and to the current setup of the Stellar trust network.

%%%\section{An example}
\label{sect:examples}

%%%\abb{
%
%%Example circular
%
%%Stellar example 
%
%%... as appropriate/possible
%}

%%%%%%%%%%%%%%%%%%%%%%%%%%%%%%%%%%%%%%%

\section{Conclusions}
\label{sect:concl}

We have presented a framework  to quantitatively characterise decentralisation, a foundational 
and highly innovative property of blockchain technologies. Although largely discussed, decentralisation is hard to define, 
as it is a complex property depending on many aspects of the  multidisciplinary and multi-layered design of blockchains. As a consequence, 
it is also a property difficult to be formally defined and  quantitatively analysed.

%The need for a formal understanding of decentralisation and its properties is recognised in the field, as also proven by some
%quantitative results that have quite  recently appeared, to the best of our knowledge, in the research literature. We have surveyed the most %relevant 
%and novel ones, and more striclty related to our work.

We have addressed decentralisation in the specific context of BFT consensus based on open quorum systems, showcasing the relevance of tools from economic theory (command games, power indices) and computational complexity theory.
We argue that the obtained results show this is a promising general approach to the formal analysis of decentralisation. 

We focused on a general class of consensus, linking decentralisation to a precise measure of the influence of 
each peer in the network (a theme largely studied in economics), an analysis of the structural properties of the consensus network, 
and the computational complexity of some proposed solutions. The obtained limiting results on Ripple and Stellar are coherent with the current practice and the proposals that industry is putting forward to improve decentralisation.

\medskip

Our results point to several avenues of future research. 
%Particularly, the implications of the long-term influence of peers, 
%which has a clear interpretation  in socio-economics theories, need further investigations. 
%Theorem~\ref{prop:Ireg} links safety of the 
%consensus, and hence of the blockchain, to the regularity of the influence matrix for uniform BTNs, in a sense reinforcing the role of 
%decentralisation for safety. Applications of limit-projections of the influence matrix to non-uniform BTNs, like Stellar, where the trust %topology could 
%in principle evolve over time, appear as an interesting direction for future research. 
We are planning to extend our analysis to other blockchains based on BFT consensus that are currently being developed, noticeably  Cobalt \cite{macbrough18cobalt} as an evolution of the Ripple/Stellar tradition. These will offer interesting use cases for benchmarking our approach.
More generally, we also want to explore the applicability of the methodology beyond the framework of Byzantine Trust Networks, since measures of the relative influence of peers are of interest for other blockchain frameworks, e.g. PoS. At the same time, we also intend to build on such measures to address the relationships between influence, decentralisation and, crucially, revenue. Properly understanding such mechanisms will serve to the long-term goal of designing more reliable and robust blockchains.

On the application side, the development of a prototype analysis toolkit and collection of relevant data is also an ongoing activity.

\newpage
\appendix
\section{Proofs}

%%%%%%%%%%%%%%%%%%%%%%%%%%
\subsection{Proofs of Section~\ref{sec:prelim}}

%%%%%%
\subsubsection{Lemma~\ref{lemma:most}}

\begin{proof}
\fbox{\eqref{uno}}
Assume $|T_i^\O(x)| > 0$. Under this assumption $j$ observes at least $|T_i^\O(x) \cap T_j \cap H|$ nodes with opinion $x$ in $T_j$. Those are the honest nodes among the nodes with opinion $x$ that both $i$ and $j$ can observe. So 
$$
|T_j^\O(x) \cap H| \geq |T_i^\O(x) \cap T_j \cap H|.
$$
Now among the nodes in $T_i^\O(x) \cap T_j$ there are at most $\beta_{ij}$ Byzantine nodes that could reveal the opposite opinion $\overline{x}$ to j. So,
$$
|T_i^\O(x) \cap T_j \cap H| \geq |T_i^\O(x) \cap T_j| - \beta_{ij}.
$$
The claim is finally established by the following series of (in)equalities:
\begin{align*}
|T_i^\O(x) \cap T_j \cap H| & \geq |T_i^\O(x) \cap T_j| - \beta_{ij} \\
	& \geq |T_i^\O(x)| - |T_i \backslash T_j| - \beta_{ij} \\
	& = |T_i^\O(x)| - |T_i| + |T_i \cap T_j| - \beta_{ij}
\end{align*}

\fbox{\eqref{due}} 
Assume $|T_i^\O(x)| > 0$. By \eqref{uno}, $|T_j^\O(x) \cap H| = |T_i \cap T_j| + |T_i^\O(x)| - |T_i| - \beta_{ij}$  whenever only the honest nodes in $T_j$ have opinion $x$. It follows that  
\begin{align*}
|T_j^\O(\overline{x})| & \leq |T_j|  -  (|T_i \cap T_j| + |T_i^\O(x)| - |T_i| - \beta_{ij}) \\
					& =  |T_j| -  |T_i \cap T_j| - |T_i^\O(x)| + |T_i| + \beta_{ij}.
\end{align*}
This completes the proof. \qed
\end{proof}

%%%%%%
\subsubsection{Lemma~\ref{lemma:half}}

\begin{proof}
The proof consists of two sub-arguments. \fbox{First} we show that safety implies that, for all $i,j \in H$:
\begin{align}
& |T_i \cap T_j| > b \cdot (|T_i| + |T_j|) + \beta_{ij} \label{eq:intersect}
\end{align}
%So let profile $\O$ be a consensus, that is, $\O(i) = x$  if and only if $\forall j \in H, |T_j(x)| > 0.5 \cdot |T_j|$, for $x \in \set{0,1}$. 
By safety, if $T^\O_i(x) \geq q_i |T_i|$ with $i \in H$, then for all $j \in H$ $T^\O_j(\overline{x}) < q |T_j|$. Assume $T^\O_i(x) \geq q |T_i|$ with $i \in H$. By Lemma \ref{lemma:most} and safety we have:
\begin{align*}
T^\O_j(\overline{x}) & \leq |T_j| -  |T_i \cap T_j| -  q|T_i| + |T_i| + \beta_{ij}  \\
				& <  q |T_j|
\end{align*}
From $|T_j| -  |T_i \cap T_j| -  q|T_i| + |T_i| + \beta_{ij}  <  q |T_j|$ we thus obtain 
\begin{align*}
|T_j| - q|T_j| -  q|T_i| + |T_i| + \beta_{ij}  & = b \cdot (|T_i| + |T_j|) + \beta_{ij} \\
& <  |T_i \cap T_j|
\end{align*}
as desired.\footnote{Cf. \cite[Proposition 4]{chase18analysis}.}

\fbox{Second} We show that safety also implies that, for all $i, j \in H$:
\begin{align}
b \cdot (|T_i| + |T_j|) + \beta_{ij} >  \frac{b}{1-b} (|T_i| + |T_j|) \label{eq:intersectx}
\end{align}

We have established that safety implies that for all $i,j \in H$, $|T_i \cap T_j| > b \cdot (|T_i| + |T_j|) + \beta_{ij}$ \eqref{eq:intersect}, that is, the size of the intersection of the trust sets of $i$ and $j$ should be larger than the maximum possible fraction of Byzantine nodes times the combined size of the trust sets, plus $\beta_{ij}$. Now recall the definition of $\beta_{ij}$ \eqref{eq:beta}. By \eqref{eq:intersect} it cannot be the case that $\beta_{ij} = |T_i \cap T_j|$. So $\beta_{ij} = b |T_k|$ where $T_k$ is the the smallest set between $T_i$ and $T_j$. Now assume, w.l.o.g. that $|T_i| \geq |T_j|$ and so that $|T_j| = x \cdot |T_i|$ with $x \in (0,1]$. By \eqref{eq:intersect} we have:
\begin{align*}
|T_i \cap T_j| & > b (|T_i| + |T_j|) + \beta_{ij} \\
& = b (|T_i| + x |T_i|) + b x |T_i| \\
& = b |T_i| (1 + 2x) 
\end{align*}
From this, and the fact that a set is always at least as large as its intersection with another we obtain a lower bound for $x$ by the following series of inequalities:
\begin{align*}
x|T_i| & \geq |T_i \cap T_j| \\
x|T_i| & > b |T_i| (1 + 2x) \\
x & > b (1 + 2x) \\
x & > b + 2bx \\
x - 2bx & > b \\
x(1-2b) & > b \\
x & > \frac{b}{1-2b}
\end{align*}
By substituting $\frac{b}{1-2b}$ for $x$ in \eqref{eq:intersect} we thus obtain a lower bound for $|T_i \cap T_j|$ in $b$. We then reformulate \eqref{eq:intersect} in terms of the combined size $\alpha = |T_i| + |T_j|$ of the two trust sets:
\begin{align*}
|T_i \cap T_j|  & >  b (\underbrace{|T_i| + x |T_i|}_\alpha) + b x |T_i| \\
& > b (|T_i| + \frac{b}{1-2b}|T_i|) + b \frac{b}{1-2b} |T_i| \\
&=  b \alpha + b \frac{b}{1-2b} \frac{1-2b}{1 - b} \alpha \ \ \ \ \ \ \ \ \ \ \ \ \ \ \ \ \ \ \ \ \ \ \ \ \mbox{as} |T_i| = \alpha \frac{1-2b}{1 - b} \\
& = b \alpha (1 + \frac{b}{1-b})\\
& = \frac{b}{1-b} \alpha
\end{align*} 
So safety implies that the size of the intersection of $T_i$ and $T_j$ must be larger than the fraction $\frac{b}{1-b}$ of the combined size of the two sets. \qed
\end{proof}

%%%%%%
\subsubsection{Lemma~\ref{lemma:intersect}}

\begin{proof}
The proof is by induction on $|H|$. \fbox{Base} if $|H| = 1$ the claim holds trivially. \fbox{Step} Now assume the claim holds for $|H| = m$ (IH). We prove it holds for $|H| = m+1$. So assume for all $i,j \in H$, $|T_i \cap T_j| > 0.25 \cdot (|T_i| + |T_j|)$, and let $k$ be the $m+1^{\mbox{th}}$ node in $H$. By IH we know that $\bigcap_{i \in H \backslash \set{k}} T_i \neq \emptyset$. Now take one of the smallest (w.r.t. size) $H_i$ with $i \in H \backslash \set{k}$ and call it $H_j$. There are two cases. \fbox{$|H_k| \leq |H_j|$}. Then $|H_j \cap H_k| \geq 0.5 \cdot |H_k|$. Since $H_j$ was smallest amongst the $H_i$, it also hols that $\forall i \in H$ $|H_i \cap H_k| \geq 0.5 \cdot |H_k|$. From this we conclude that $\bigcap_{i \in H} T_i \neq \emptyset$. \fbox{$|H_k| > |H_j|$} Then $|H_j \cap H_k| \geq 0.5 \cdot |H_j|$. Since $H_j$ was smallest amongst the $H_i$, it also hols that $\forall i \in H$ $|H_i \cap H_j| \geq 0.5 \cdot |H_j|$, from which we also conclude $\bigcap_{i \in H} T_i \neq \emptyset$. \qed
\end{proof}

%%%%%%%%%%

\subsubsection{Theorem \ref{th:QI}}

\begin{proof}
\fbox{Left to right} Straightforwardly proven by contraposition.
\fbox{Right to left} Proceed by contraposition and assume there is a profile $\O$, a function $s$ and agents $1$ and $2$ such that all $k \in C_1 = \bigcup_{1 \leq m} F_s^m(\set{1})$ agree on $x$ and all $k \in C_2 = \bigcup_{1 \leq m} F_s^m(\set{2})$ agree on $\overline{x}$. Observe that $C_1$ and $C_2$ are quora containing (since the BTN is vetoed) $1$ and $2$. There are two cases. Either $C_1 \cap C_2 = \emptyset$, or if that is not the case then $C_1 \cap C_2 \subseteq B$ as only Byzantine nodes can reveal different opinions to different nodes. Hence $C_1$ and $C_2$ are either disjoint or their intersection contains only Byzantine nodes. \qed
\end{proof}

%%%%%%

\subsubsection{Theorem \ref{prop:qi-conp-complete}}

\begin{proof}
To see that the problem is contained in coNP,
we describe a nondeterministic polynomial-time algorithm
to decide whether~$\C$ does not have the quorum
intersection property.
The algorithm guesses two disjoint sets~$Q_1,Q_2 \subseteq N$.
Then, for each~$u \in [2]$ and for each~$i \in Q_u$,
the algorithm checks if there is some~$C \in \Win(i)$
such that~$C \subseteq Q_u$.
That is, the algorithm verifies that~$Q_1$ and~$Q_2$ are quora
(which is the case if and only if all checks succeed).
Clearly, all checks can be performed in polynomial time.
Thus, deciding whether~$\C$ has the quorum intersection
property is in coNP.

To show coNP-hardness, we reduce from the coNP-complete
propositional unsatisfiability problem (\textsc{UNSAT}).
Let~$\varphi$ be a propositional formula
containing the propositional variables~$x_1,\dotsc,x_n$.
Without loss of generality, we may assume that~$\varphi$ is in 3CNF,
i.e., that~$\varphi = c_1 \wedge \dotsm \wedge c_m$
and that for each~$j \in [m]$,~$c_j = (T_{j,1} \vee T_{j,2} \vee T_{j,3})$,
where~$T_{j,1},T_{j,2},T_{j,3}$ are literals.
We construct a command game~$\C = \langle N, \Win \rangle$
that has the quorum intersection property if and only
if~$\varphi$ is unsatisfiable.

We let:
\[ N = \{ z_{0}, z_{1} \} \cup \{ c_j\ |\ j \in [m] \} \cup \{ y_i, p_i, n_i\ |\ i \in [n] \}. \]
That is, we have nodes~$z_0,z_1$, a node~$c_j$ for each clause of~$\varphi$,
and nodes~$y_i,p_i,n_i$ for each variable occurring in~$\varphi$.

We define the sets of winning coalitions of the nodes in~$N$ as follows:
%\begin{align*}
\[
\begin{array}{r l p{3.5cm}}
  \Win(z_0) = & \{ \{  z_0, y_1,\dotsc,y_n \} \}; & \\
  \Win(z_1) = & \{ \{  z_1, c_1,\dotsc,c_m \} \}; & \\
  \Win(y_i) = & \{ \{ y_i, p_i \}, \{ y_i, n_i \} \} & $\mathit{for each}~i \in [n]$; \\
  \Win(c_j) = & \{ \{ c_j, \sigma(T_{j,1}) \}, \{ c_j, \sigma(T_{j,2}) \}, \{ c_j, \sigma(T_{j,3}) \} \} & $\mathit{for each}~j \in [m]$; \\
  \Win(p_i) = & \{ \{ p_i, z_0 \}, \{ p_i, z_1 \}  \} & $\mathit{for each}~i \in [n]$; $\mathit{and}$ \\
  \Win(n_i) = & \{ \{ n_i, z_0 \}, \{ n_i, z_1 \}  \} & $\mathit{for each}~i \in [n]$; \\
\end{array} 
\]
%\end{align*}
where for each positive literal~$x_i$, we let~$\sigma(x_i) = p_i$;
and for each negative literal~$\neg x_i$, we let~$\sigma(\neg x_i) = n_i$.

We argue that~$\C = \langle N, \Win \rangle$
has the quorum intersection property if and only
if~$\varphi$ is unsatisfiable.
We show the equivalent statement
that~$\C = \langle N, \Win \rangle$
does \textbf{not} have the quorum intersection property if and only
if~$\varphi$ is \textbf{satisfiable}.

$(\Rightarrow)$
Suppose that there exist~$Q_1,Q_2 \in \bm{\mathrm{Q}}^{\C}$
such that~$Q_1 \cap Q_2 = \emptyset$.
We may assume without loss of generality that~$Q_1$ and~$Q_2$
are core quora.
We know that neither~$Q_1$ nor~$Q_2$ can contain both~$z_0$ and~$z_1$,
because each quorum of~$\C$ must contain either~$z_0$ or~$z_1$
(by the specific construction of~$\C$).
Thus, we may assume that~$z_0 \in Q_2$ and~$z_1 \in Q_1$.

Then also~$\{ y_1,\dotsc,y_n \} \subseteq Q_2$.
Moreover, for each~$i \in [n]$, we know that then either~$p_i \in Q_2$
or~$n_i \in Q_2$ (and not both).
We also know that~$\{ c_1,\dotsc,c_m \} \subseteq Q_1$.
Now define the truth assignment~$\alpha : \{ x_1,\dotsc,x_n \} \rightarrow \{ 0,1 \}$
as follows.
For each~$i \in [n]$, we let~$\alpha(x_i) = 1$ if~$n_i \in Q_2$
and we let~$\alpha(x_i) = 0$ if~$p_i \in Q_2$.

We show that~$\alpha$ satisfies~$\varphi$.
Take an arbitrary clause~$c_j$ of~$\varphi$.
Due to the construction of~$\Win(c_j)$,
we know that~$Q_1$ contains (at least)
one of~$\sigma(T_{j,1}),\sigma(T_{j,2}),\sigma(T_{j,3})$.
Take some~$u \in [3]$ such that~$\sigma(T_{j,u}) \in Q_1$.
We show that~$\alpha$ satisfies~$T_{j,u}$.
To derive a contradiction, suppose the contrary,
i.e., that~$\alpha$ does not satisfy~$T_{j,u}$.
Then~$\sigma(T_{j,u}) \in Q_2$
(by the construction of~$\alpha$),
and thus~$Q_1 \cap Q_2 \neq \emptyset$,
which contradicts our initial assumption that~$Q_1 \cap Q_2 = \emptyset$.
Thus, we can conclude that~$\alpha$ satisfies~$T_{j,u}$.
This concludes our proof that~$\varphi$
is satisfiable.

$(\Leftarrow)$
Conversely, suppose that~$\varphi$ is satisfiable, i.e.,
that there is some truth assignment~$\alpha : \{ x_1,\dotsc,x_n \} \rightarrow \{ 0,1 \}$
that satisfies all clauses of~$\varphi$.
For each clause~$c_j$, define~$\rho(c_j)$ to be some literal~$T_{j,u}$ in~$c_j$
that is satisfied by~$\alpha$.
Moreover, for each~$i \in [n]$, let~$\mu(x_i) = n_i$ if~$\alpha(x_i) = 1$ and~$\mu(x_i) = p_i$
if~$\alpha(x_i) = 0$.
Consider the following two sets~$Q_1,Q_2 \subseteq N$:
\[ \begin{array}{r l}
  Q_1 = & \{ z_1,c_1,\dotsc,c_m \} \cup \{ \sigma(\rho(c_j))\ |\ j \in [m] \}; \text{ and} \\
  Q_2 = & \{ z_0,y_1,\dotsc,y_n  \} \cup \{ \mu(x_i)\ |\ i \in [n]  \}; \\
\end{array} \]
It is straightforward to verify that~$Q_1$ and~$Q_2$ are both quora,
i.e., that~$Q_1,Q_2 \in \bm{\mathrm{Q}}^{\C}$.
Moreover, since it holds that~$Q_1 \cap Q_2 = \emptyset$,
we know that~$\C$ does not satisfy the quorum intersection property. \qed
\end{proof}

%%%%%%
\subsubsection{Theorem \ref{prop:sliceadd-conp-complete}}
\begin{proof}[sketch]
Membership in coNP follows directly from
Proposition~\ref{prop:qi-conp-complete}.
We show coNP-hardness by modifying the reduction
given in the proof of Proposition~\ref{prop:qi-conp-complete}.
We describe a reduction from UNSAT.
Let~$\varphi$ be a propositional formula
containing the propositional variables~$x_1,\dotsc,x_n$.
Without loss of generality, we may assume that~$\varphi$ is in 3CNF.
Moreover, without loss of generality, we may assume
that~$\varphi[x_1 \mapsto 1]$ is unsatisfiable.
We construct the command game~$\C =
\langle N, \Win \rangle$ as in the proof of
Proposition~\ref{prop:qi-conp-complete}.
Moreover, we transform~$\C$ into~$\C'$
by removing the slice~$\{ y_1, n_1 \}$ from~$\Win(y_1)$.
By a similar argument as the one used in the proof
of Proposition~\ref{prop:qi-conp-complete}, we know that
the command game~$\C'$ satisfies the quorum
intersection property, because~$\varphi[x_1 \mapsto 1]$
is unsatisfiable.
Moreover,~$\C$ satisfies the quorum intersection property
if and only if~$\varphi$ is unsatisfiable. \qed
\end{proof}

%%%%%%%%%%%%%%%%%%%%%%%%%%
\subsection{Proofs of Section~\ref{sec:influence}}

%%%%%%
\subsubsection{Theorem \ref{prop:Ireg}}

\begin{proof}
We first need to introduce some auxiliary notation. Given an influence matrix $\I$, $\g(\I) = \tuple{N, E}$ denotes the (directed) graph of $\I$, where $E = \set{ij \mid I_{ji} > 0}$. Intuitively $ji \in E$ whenever $j$ influences $i$ (i.e., has a positive Banzhaf-Penrose index in $i$'s simple game). By assumption there exist nodes that influence all other honest nodes (themselves included). Let now $E(\bigcap_{i \in H} T_i) = \set{i \in N \mid \exists j \in \bigcap_{i \in H} T_i, ij \in E}$, that is, the set of nodes that influence some node that influences all honest nodes. We distinguish three cases.

\fbox{$|B| = 0$} So there are no Byzantine nodes in $\T$, and $N = H = E(\bigcap_{i \in H} T_i)$
It follows that $\g(\I(\T))$ is strongly connected (there exists a path from every node to every node) and aperiodic (there are no two cycles in the graph which are divided by an integer larger than $1$). Trivially, it is also closed (there exists no node outsie $\g(\I(\T))$ that influences nodes in $\g(\I(\T))$).
The full regularity of $\I$ then follows from known results on influence matrices (cf. \cite[Lemma 11]{proskurnikov17tutorial}): if $\g(\I)$ contains only one strongly connected component and is aperiodic, then $\I$ is fully regular.

\fbox{$|B| = 1$ and $E(\bigcap_{i \in H} T_i) \cap B \neq \emptyset$} So there exists exactly one Byzantine node in $N$, which furthermore belongs to $E(\bigcap_{i \in H} T_i)$ (that is, it influences at least one honest node influencing in turn all honest nodes). Call $i$ such Byzantine agent. Recall that, by construction, $ii\in E$. So the subgraph consisting of $i$ and the self-loop $ii$ is the only closed, aperiodic, strongly connected component of $\g(\I(\T))$. Full regularity therefore follows by know results as in the previous case.

\fbox{$|B| \geq 1$ or $E(\bigcap_{i \in H} T_i) \cap B = \emptyset$} So there exist several Byzantine nodes in $N$ or there are Byzantine nodes which do not influence nodes in $\bigcap_{i \in H} T_i$. Both such cases determine the existence, by arguments analogous to those provided for the previous two cases, of several closed, aperiodic strongly connected components in $\g(\I(\T))$. The regularity of $\I$ then follows again from known results on influence matrices (cf. \cite[Theorem 12]{proskurnikov17tutorial}, \cite[Theorem 8.1]{jackson08social}: if all closed strongly connected components of $\g(\I)$ are aperiodic, then $\I$ is regular.

In all three cases $\I$ is regular, proving claim 1). In the first two cases ($|B| \leq 1$) $\I$ is furthermore fully regular, establishing claim 2). Finally, to prove claim 3) we reason as follows. If there exists a honest agent in $\bigcap_{i \in H} T_i$ trusting a Byzantine agent, then the only closed strongly connected components of $\g(\I)$ are the Byzantine nodes. In the limit, such nodes will therefore be the only ones having positive influence.
\qed
\end{proof}

%%%%%%%%%%%%%%%%%%%%%%%%%%%%%%%%%%%%%%%
\bibliographystyle{plain}
\bibliography{biblio.bib,addenda.bib}

\end{document}